\newtheorem{theorem}{Theorem}
\newtheorem{lemma}{Lemma}
\newtheorem{effect}{Effect}
\newtheorem{assumption}{Assumption}
\title{Conditional Independence Testing with Heteroskedastic Data and Applications to Causal Discovery}
\author{%
  Wiebke Günther \\
  German Aerospace Center\\
  Institute of Data Science\\
  07745 Jena, Germany \\
  \texttt{wiebke.guenther@dlr.de} \\
  \And
  Urmi Ninad 
  \thanks{equal contribution} \\
  Technische Universit\"{a}t Berlin \\
  10623 Berlin, Germany \\
  \texttt{urmi.ninad@tu-berlin.de} \\
  \And
  Jonas Wahl $^*$ \\
  Technische Universit\"{a}t Berlin \\
  10623 Berlin, Germany \\
  \texttt{wahl@tu-berlin.de} \\
  \AND
  Jakob Runge\\
  German Aerospace Center\\
  Institute of Data Science\\
  07745 Jena, Germany \\
  and\\
  Technische Universit\"{a}t Berlin \\
  10623 Berlin, Germany \\
  \texttt{jakob.runge@dlr.de} \\
}
\begin{document}

\maketitle

\begin{abstract}
  Conditional independence (CI) testing is frequently used in data analysis and machine learning for various scientific fields and it forms the basis of constraint-based causal discovery. Oftentimes, CI testing relies on strong, rather unrealistic assumptions. One of these assumptions is homoskedasticity, in other words, a constant conditional variance is assumed. We frame heteroskedasticity in a structural causal model framework and present an adaptation of the partial correlation CI test that works well in the presence of heteroskedastic noise, given that expert knowledge about the heteroskedastic relationships is available. Further, we provide theoretical consistency results for the proposed CI test which carry over to causal discovery under certain assumptions. Numerical causal discovery experiments demonstrate that the adapted partial correlation CI test outperforms the standard test in the presence of  heteroskedasticity and is on par for the homoskedastic case. Finally, we discuss the general challenges and limits as to how expert knowledge about heteroskedasticity can be accounted for in causal discovery.
\end{abstract}

\section{Introduction} \label{sec_introduction}
Conditional independence (CI) testing is a frequently used step across a wide range of machine learning tasks for various scientific fields. It is also very challenging.
Discovering causal relationships from purely observational data is an even more challenging task and an important topic in sciences where real experiments are infeasible, e.g. in  climate research  \citep{EbertUphoff2012,runge2019inferring}. One can distinguish several frameworks that address this problem: score-based approaches \citep{Chickering2002}, restricted structural causal models \citep{Peters2018}, and constraint-based methods \citep{sprites2000} that rely on CI testing. A typical representative is the PC algorithm \citep{Spirtes1991} which can be combined with any conditional independence test and is thus adaptable to a wide range of data distributions. It utilizes the Faithfulness assumption to conclude that no causal link can exist between two variables  $X$ and $Y$ if a CI test suggests that they are independent given a set $Z$. 

When a linear additive noise model may be assumed, a popular CI test is the partial correlation test \citep{lawrance1976conditional} which has the advantages of fast computation time and that the null distribution is known analytically. As a disadvantage rather strict and often unrealistic assumptions have to be satisfied. One of these assumptions is homoskedasticity, meaning that the variance of the error term is constant. When this assumption is violated, i.e., in the case of heteroskedasticity, the variance can, for instance, depend on the sampling index or the value of one or multiple influencing variables. 

In regression analysis one distinguishes between \emph{impure} and \emph{pure} heteroskedasticity. Impure heteroskedasticity stems from an insufficient model specification, e.g.\ if unobserved variables or confounders are present. Another reason might be that the model fails to capture the full relationships of the variables, e.g.\ the underlying model might be linear with multiplicative noise rather than additive. 
On the other hand, pure heteroskedasticity is non-constant noise variance that is present despite a correct model. 

The sources for heteroskedasticity in real data are manifold. 
For example, in environmental sciences precipitation in different areas might exhibit different variances that are unaccounted for by other variables in the system, i.e. location-scaled noise. Such a problem could be introduced by aggregating data of different catchments and not adding a variable that is well enough correlated with catchment location \citep{merz2021causes}.
An example for sampling index-dependent heteroskedasticity in the time series case are seasonal effects that are present in many climate variables \citep{Proietti+2004}.
Finally, an example where the noise variance of a variable is dependent on an observed cause is the distance to sea influencing the variability of temperature. This is a special case of state-dependent noise.

One assumption of the PC algorithm is causal sufficiency, meaning that there are no unobserved confounders, formally excluding impure heteroskedasticity. However, in practice this assumption is often violated which can lead to heteroskedastic noise. 

If unaccounted heteroskedasticity is present in the data, the estimator of the ordinary least squares (OLS) regression slope parameter is still unbiased. However, the estimator of the covariance matrix of the parameter estimates can be biased and inconsistent under heteroskedasticity \citep{long2000using}. This can skew subsequent partial correlation significance tests and affect the link detection rate of a causal discovery method. Furthermore, heteroskedasticity might even lead to the detection of wrong links (false positives). Moreover, the Gauss-Markov theorem assumes homoskedasticity and, hence, with heteroskedastic data the OLS slope estimator is no longer guaranteed to be the most efficient linear unbiased estimator, which can further harm power of the CI test.

There are multiple ways to treat heteroskedasticity, either already during the modeling step by addressing possible model misspecification, by pre-processing the data, e.g.\ by applying a $\log$-transform, or post-hoc using robust statistics for CI testing. Adapting the model, for example, by using CI tests allowing for multiplicative dependencies \citep{runge2018} might not always be feasible for limited sample sizes. Pre-processing in real-world problems also comes with drawbacks: The transformed variables are difficult to interpret and can introduce dependencies or spurious links.

In this work we propose an adapted weighted least-squares (WLS) partial correlation variant as a CI test for the PC algorithm that is able to deal with particular forms of heteroskedasticity. \textbf{Our contributions} are theoretical consistency results as well as numerical experiments demonstrating that this approach yields well-calibrated CI tests leading to controlled false positive rates and also improves upon detection power as compared to the standard partial correlation CI test. Our approach requires expert knowledge in that it needs to be known which of the variables the heteroskedasticity depends on, or if it depends on the sampling index.

\section{Related Work}

The effect of heteroskedasticity on the standard Pearson correlation test has been investigated, for example, by \citet{wilcox2001}. Remedies for heteroskedasticity also have been extensively studied. \citet{hayes2007} discuss and evaluate heteroskedasticity-consistent standard errors. Practical approaches to choose weights for WLS, consistency and asymptotic results have been obtained, e.g.\ by \citet{neumann1994, fan1998, carroll1982adapting, robinson1987asymptotically, brown2007variance}. 
\citet{romano2017resurrecting} propose a method that combines WLS with heteroskedasticity-consistent standard errors.

Recently, within the causal discovery framework of restricted structural causal models \citep{Peters2018} several authors \citep{xu2022, tagasovska2020} have relaxed the assumption of homoskedasticity. In these works the authors focus on identifying cause from effect only in the bivariate setting. Specifically, \citet{xu2022} base their inference score on the log-likelihood of regression residuals and apply a binning-scheme on regions where variance is approximately constant.

More closely related to our work is the robust-PC method of \citet{kalisch2008} who use an estimator in a recursive partial correlation formula to robustify the PC algorithm against outliers and otherwise contaminated data.

Non-constant conditional variance can also be regarded as a specific kind of distributional shift or context change. The effects of distributional shifts on causal discovery have been investigated, for instance, in \citet{huang2020causal, mooij2020joint} where the authors propose a framework that includes the environment into the structural causal model formulation using context variables. 

\section{Problem setting}

\subsection{Heteroskedasticity in causal models} \label{sec_scm}

In this work, we consider discovering causal relationships in linear models in the presence of non-constant error variance which is potentially dependent on the parents or on the sampling index. To translate this into a structural causal model (SCM), we represent heteroskedasticity as a scaling function of the noise variable. In this way, it can also be viewed as state-dependent or multiplicative noise.

Consider finitely many random variables $V = (X^1, \ldots, X^d)$ with joint distribution $\mathcal{P}_X$ over a domain $\mathcal{X} = \mathcal{X}_1 \times \ldots \times \mathcal{X}_d$. Then we are interested in $n$ samples from the following SCM with assignments
\begin{align} \label{scm_hs}
X_t^i &:= f_i(Pa(X_t^i)) + h_i(H(X_t^i)) \cdot N_i, \qquad i = 1, \ldots, d
\end{align}
where $f_i$ are linear functions, $t \in \mathcal{T}$ stands for the sample index, and we have the heteroskedasticity functions $h_i: \mathcal{X} \times \mathcal{T} \xrightarrow{} \mathbf{R}_{\geq 0}$. The noise variables $N_i$ are assumed independent standard Gaussian distributions. The parent set of the variable $X_t^i$ is denoted by $Pa(X_t^i)$, and $H(X_t^i) \subset Pa(X_t^i) \cup \{t\}$ which can also be the empty set. Furthermore, we make the restriction that the causal relationships are stable over time, i.e. the parent sets $Pa(X_t^i)$ as well as the functions $f_i$ are not time-dependent.

For $h_i \equiv const.$ the homoskedastic version of this SCM is obtained, which simply is a linear SCM with Gaussian noise. 
To learn such a causal model from data, a well-known and easy to implement method is to use a constraint-based causal discovery method with CI tests based on partial correlation. 

To recap the partial correlation test and to illustrate our adaptations, consider the following simple example model for a time-series, or otherwise indexed, SCM~\eqref{scm_hs}. The general multivariate case is discussed in section \ref{sec_extension}.
\begin{align} \label{simple_scm}
\begin{split}
 X_t &= aZ_t + cE_t + h_X(Z_t, t) \cdot N_X  \qquad \qquad Z_t = N_Z \\
Y_t &= bZ_t + cE_t + h_Y(Z_t, t) \cdot N_Y \qquad \qquad ~ E_t = N_E 
\end{split}
\end{align}
for some constants $a$, $b$ and $c$, and standard normal independent noise terms $N_X$, $N_Y$, $N_Z$, and $N_E$. The constant $c$ is set to zero for (conditionally) independent $X$ and $Y$. Note that here we are only interested in discovering the relationship between $X$ and $Y$ and not in learning the whole causal graph.

An intuitive way to define and understand partial correlation is in terms of the correlation between residuals. The partial correlation between $X$ and $Y$ given a controlling variable $Z$, or a set thereof, is the correlation between the residuals $r_X$ and $r_Y$ resulting from the linear regression of $X$ on $Z$ and of $Y$ on $Z$, respectively. The linear regression that is used in the standard variant of the partial correlation test is ordinary least squares (OLS) regression, thus we refer to this CI test as \emph{ParCorr-OLS}.
The Pearson correlation coefficient $\rho$ is then estimated by
$ 
\widehat{\rho}(X,Y | Z) = \frac{\widehat{Cov(r_X, r_Y)}}{\sqrt{\widehat{Var(r_X)} \widehat{Var(r_Y)}}}.
$
We use the hat operator to indicate estimators.
For testing the null hypothesis $H_0: \rho=0 $ versus the alternative $ H_1: \rho\neq 0 $, we use the studentized version of the partial correlation as a test statistic
            $
                 T(\hat{\rho})= \frac{\hat{\rho} \sqrt{n-2}}{\sqrt{1-\hat{\rho}^2}}.
            $
This statistic is $t(n-2-k)$-distributed, where $k$ is the number of variables we are conditioning on, and $n$ is the sample size. 

However, for the partial correlation test to be correct, the assumptions of OLS, in particular homoskedasticity, have to be fulfilled. In the next section, we investigate what happens if this assumption is violated.


\subsection{Effects of heteroskedasticity on partial correlation}

Under heteroskedasticity, the estimator of slope regression parameters in OLS regression is still unbiased. However, the estimator of the covariance matrix of the parameter estimates can be biased and inconsistent under heteroskedasticity \citep[p.268]{wooldridge2009}. This also affects the subsequent residual-based correlation test in ways that we summarize below. Proofs for and further discussions of the following statements are provided in the Supplement \ref{proof_effects}.

\begin{effect} \label{lemma_effect1}
Under the null hypothesis, the studentized Pearson correlation coefficient is $t(n-2-k)$ distributed if $X$ is independent of the variables inducing heteroskedasticity in $Y$.
\end{effect}

This means that if only one node is compromised by heteroskedasticity, or, more generally, the heteroskedasticity in $X$ is independent of that in $Y$, the type-I error rate of the t-test will not be affected. Please refer to the middle plot in figure \ref{fig_effect} for a visualization of the associated null distribution in comparison to the analytical one used by the partial correlation test. On the other hand, in the left plot the case where both $X$ and $Y$ are affected by the same kind of heteroskedasticity is shown, leading to a different null distribution.

\begin{figure}
  \centering
  \includegraphics[scale=0.33]{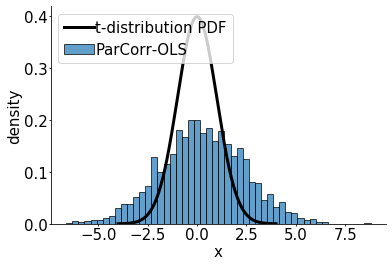}
  \includegraphics[scale=0.33]{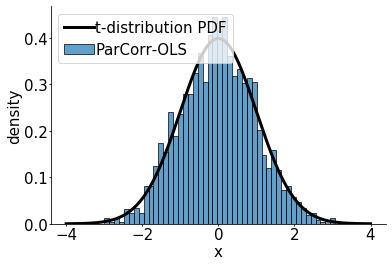}
  \includegraphics[scale=0.33]{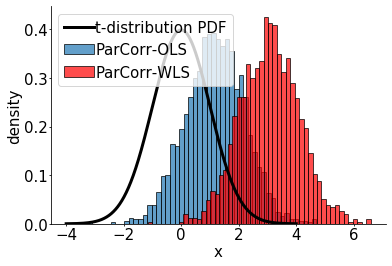}
  \caption{Data generated with SCM~\eqref{simple_scm}. (Left) Null distribution when $X$ and $Y$ are affected by the same kind of heteroskedasticity. (Middle) Null distribution when only $X$ is affected by heteroskedasticity. (Right) Alternative distribution (assuming dependence with $c\neq 0$) when only $X$ is affected by heteroskedasticity. In all plots the heteroskedasticity is a linear function of the value of $Z$ and the solid black line depicts the $t$-distribution under the null hypothesis.}
  \label{fig_effect}
\end{figure}

\begin{effect} \label{lemma_effect2}
If $X$ and $Y$ are dependent and at least one is affected by heteroskedasticity, the detection power of the t-test might be degraded.
\end{effect}
In particular, if $X$ and $Y$ are dependent, $X$ is affected by linear heteroskedasticity, the mean of the $t$-distribution is closer to zero compared to the distribution of weighted least squares based studentized partial correlation coefficient, which is introduced in the next section and essentially transforms the data to be homoskedastic, for fixed sample size. See the right plot in figure \ref{fig_effect}. 
This effect is an immediate consequence of the reduced efficiency of the OLS slope estimate. Intuitively, heteroskedasticity masks the relationship between $X$ and $Y$.


\section{Weighted least squares partial correlation test and causal discovery}

We have seen that the standard partial correlation test is sensitive to heteroskedastic noise since it is based on an OLS regression step. Therefore, we propose to replace the OLS regression by the weighted least squares (WLS) approach which is known to be able to handle non constant error variance. We will refer to the resulting CI test as \emph{ParCorr-WLS}.

\subsection{Weighted least squares partial correlation test} \label{sec_estimation}
The idea of WLS is to perform a re-weighting of each data point depending on how far it is from the true regression line. It is reasonable to assume data points where the error has low variance to be more informative than those with high error variance. Therefore, ideally the weights are chosen as the inverse variance of the associated error.

To formalize this idea, consider the linear model $ y =X \beta +\varepsilon $ with 
$\operatorname {E} [\varepsilon \mid X ]=0,\ \operatorname {Cov} [\varepsilon \mid X ]=\operatorname{diag}(\sigma^2_i)_{i=1, \ldots, n} $ for $n$ observations. The variance of the error term $\varepsilon_i$ is denoted by $\sigma^2_i$. Note, as opposed to the assumptions of OLS, the entries of the conditional variance matrix are allowed to differ from each other.
Denote the weight matrix by $W := \operatorname{diag}(\frac{1}{\sigma^2_i})_{i=1, \ldots, n}$.
The WLS method estimates $\beta$ by solving the adjusted optimization problem
\[
\hat {\beta } =\underset {b}{\operatorname {argmin}} (y - X b )^{\mathsf {T}} W (y - X b ).
\]
This objective is quadratic, thus we can write down the solution in closed form
\[ 
\hat{\beta } =\left(X ^{\mathsf {T}} W X \right)^{-1}X ^{\mathsf {T}}W y.
\]

If the true weights are known, WLS is equivalent to applying OLS to a linearly transformed, homoskedastic version of the data as the weights have the effect of standardizing the scale of the errors. Thus, the following lemma holds. Proofs for these statement can be found, for instance, in \citet{greene2003}.

\begin{lemma} \label{lemma_wls}
If the weights are chosen as the reciprocal of the error variance per sample, the WLS estimator is consistent, efficient, and asymptotically normal, as well as BLUE. 
\end{lemma}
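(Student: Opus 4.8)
The plan is to reduce the entire statement to the classical theory of ordinary least squares by exploiting precisely the equivalence noted just above the lemma: with the true inverse-variance weights, WLS \emph{is} OLS applied to a whitened version of the model. First I would factor the weight matrix. Since $W = \operatorname{diag}(1/\sigma^2_i)$ is diagonal with strictly positive entries, it admits the symmetric positive-definite square root $P := W^{1/2} = \operatorname{diag}(1/\sigma_i)$, so that $W = P^{\mathsf{T}} P$. Pre-multiplying the model $y = X\beta + \varepsilon$ by $P$ yields the transformed system $\tilde{y} = \tilde{X}\beta + \tilde{\varepsilon}$ with $\tilde{y} = Py$, $\tilde{X} = PX$, and $\tilde{\varepsilon} = P\varepsilon$. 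Two checks then show this system satisfies the standard Gauss--Markov assumptions: $\operatorname{E}[\tilde{\varepsilon}\mid X] = P\,\operatorname{E}[\varepsilon\mid X] = 0$, and $\operatorname{Cov}[\tilde{\varepsilon}\mid X] = P\operatorname{diag}(\sigma^2_i)P^{\mathsf{T}} = I_n$, so the transformed errors have zero mean and identity (hence constant) conditional covariance; because the $N_i$ are Gaussian and $P$ is deterministic, $\tilde{\varepsilon}$ is also Gaussian. Finally, the one-line algebraic identity $(\tilde{X}^{\mathsf{T}}\tilde{X})^{-1}\tilde{X}^{\mathsf{T}}\tilde{y} = (X^{\mathsf{T}}WX)^{-1}X^{\mathsf{T}}Wy = \hat{\beta}$ confirms that the OLS estimator of the transformed model coincides with the WLS estimator.

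With this reduction in hand, each claimed property follows by quoting the corresponding OLS result for the homoskedastic, Gaussian model $\tilde{y} = \tilde{X}\beta + \tilde{\varepsilon}$. Unbiasedness and, under the usual regularity condition that $\frac{1}{n}\tilde{X}^{\mathsf{T}}\tilde{X}$ converges to a positive-definite limit, consistency follow from standard OLS arguments. The Gauss--Markov theorem applied to the transformed model gives that $\hat{\beta}$ has minimum variance in the class of linear unbiased estimators, i.e.\ BLUE. Exact finite-sample normality of $\hat{\beta}$ follows from the Gaussianity of $\tilde{\varepsilon}$ (a linear map of a Gaussian vector), which in particular yields asymptotic normality; alternatively a central limit theorem delivers the asymptotic statement under weaker moment conditions. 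Efficiency in the strong sense follows because, under Gaussian errors, the OLS/WLS estimator is the maximum-likelihood estimator and attains the Cram\'er--Rao lower bound.

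The step requiring the most care is transferring the BLUE property \emph{back} to the original heteroskedastic model, since a priori the Gauss--Markov theorem only speaks about the transformed variables. The point is that the whitening map $P$ is invertible, so the class of estimators that are linear in $\tilde{y}$ is identical to the class of estimators linear in $y$: any $A\tilde{y} = (AP)y$ is linear in $y$, and conversely any $By = (BP^{-1})\tilde{y}$ is linear in $\tilde{y}$, with unbiasedness and variance preserved under this relabeling. Hence ``best linear unbiased'' in the transformed problem is literally ``best linear unbiased'' in the original problem --- this is exactly the content of the generalized Gauss--Markov (Aitken) theorem. The remaining subtlety is purely one of regularity: consistency and asymptotic normality implicitly require the design conditions above together with $\sigma_i^2$ bounded away from $0$ and $\infty$, so that the transformation $P$ does not distort these limits; I would state these explicitly as standing assumptions rather than re-derive them.
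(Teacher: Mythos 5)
Your proof is correct and follows exactly the route the paper itself indicates: the paper does not prove this lemma but defers to the textbook treatment (Greene, 2003) after noting that ``WLS is equivalent to applying OLS to a linearly transformed, homoskedastic version of the data,'' which is precisely your whitening argument via $P = W^{1/2}$ followed by Gauss--Markov, the Aitken transfer of the BLUE property, and maximum-likelihood efficiency under Gaussian errors. Your explicit handling of the regularity conditions for consistency and asymptotic normality is a welcome addition that the paper leaves implicit.
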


Moreover, note that the weighted residuals $W(y - X \hat{\beta})$ are homoskedastic.

The goal now is to approximate the conditional variance function of both $X$ and $Y$ in SCM~\eqref{simple_scm}. Since it works analogously for both cases, we illustrate the approach for $X_t = aZ + s(Z,t) \cdot N_X$, i.e. we approximate $\sigma^2(z, t) = Var(X_t | Z=z)$. 

For that, we use a residual-based non-parametric estimator for the conditional variance, similar to the approach of \citet{robinson1987asymptotically}.
Motivated by the identity $Var(X | Z) = \mathbb{E}[(X - \mathbb{E}[X | Z])^2 |Z]$, and noting that this is the regression of $(X-aZ)^2$ on $Z$, the first step is using OLS regression to obtain the squared residuals $(X - \hat{a} Z)^2$. Afterwards, we use a non-parametric regression method to regress these residuals on $Z$ and thereby predict the conditional mean by using a linear combination of the $k$ residuals closest in $Z$ value. For sampling index-dependent heteroskedasticity this turns into a windowing approach, which essentially smoothes the squared residuals.


Algorithm~\ref{pseudocode} details the proposed partial correlation CI test based on the feasible WLS approach that employs our weight approximation method. For the test to perform well, it is crucial to know the type of heteroskedasticity, more precisely, which of the predictors the variance depends on. In practice, this kind of expert knowledge could be obtained by performing a test for heteroskedasticity, e.g. as suggested in \citet[p.277]{wooldridge2009} or by investigating plots of the residuals.
Here we make the limiting assumption that the heteroskedasticity only depends on one of the predictors or on the sampling index. Further extensions are considered in the section \ref{sec_discussion}.

Now, we formulate our main assumption under which it is possible to obtain a consistency result for the WLS method that uses our weight approximation method.

\begin{assumption}[Heteroskedastic relationships] \label{weight_assumption}
For each node $X^i$, $i=1, \ldots, d$ in SCM~\eqref{scm_hs}, the skedasticity or noise scaling function $h_i$ only depends on one of the predictors $H$ or on the sampling index $t$, i.e. its domain is one dimensional, or it is constant. Furthermore, it is known what it depends on.
\end{assumption}

We also have to impose a rather technical assumption on the functions $h_i$.
\begin{assumption}[Weight approximation] \label{h_assumption}
Assumptions (3.3) - (3.5) from \citet{robinson1987asymptotically}.
\end{assumption}

The proof of the following lemma can be found in \citet{robinson1987asymptotically}.

\begin{lemma} \label{lemma_feasible_wls}
Under assumptions~\ref{weight_assumption} and \ref{h_assumption}, the WLS estimator that uses the reciprocal of the approximated variance as weights is consistent and attains the correct covariance matrix asymptotically.
\end{lemma}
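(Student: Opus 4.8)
The plan is to recognize that this statement is exactly the feasible generalized least squares setting of \citet{robinson1987asymptotically}, so that the proof amounts to casting our SCM into his framework and checking that the stated regularity conditions transfer the properties of the infeasible WLS estimator (Lemma~\ref{lemma_wls}) to the feasible one. Concretely, the strategy is: first reduce to the scalar-covariate case using Assumption~\ref{weight_assumption}; then establish that the residual-based nonparametric variance estimate is uniformly consistent and bounded away from zero; and finally invoke the asymptotic equivalence of feasible and infeasible WLS to conclude.

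First, I would fix a node $X^i$ and use Assumption~\ref{weight_assumption} to write its structural equation in the reduced form $X = Z\beta + u$, where $Z$ collects the linear parents and the error satisfies $\mathrm{E}[u \mid \cdot]=0$ and $\mathrm{Var}(u \mid w) = \sigma^2(w)$ with $w$ the single known scalar argument of $h_i$ (either one parent value or the sampling index $t$). The one-dimensionality is essential here: it places us in the nonparametric regime where the variance function can be estimated at a rate fast enough to avoid the curse of dimensionality, which is precisely what Robinson's conditions (3.3)--(3.5), imported as Assumption~\ref{h_assumption}, are designed to exploit.

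Second, I would analyze the two-stage weight estimator, which regresses the squared OLS residuals $(X-\hat a Z)^2$ nonparametrically on $w$. I would split the error into two pieces: replacing $(X-\hat a Z)^2$ by the true squared error $u^2=(X-aZ)^2$ costs only a lower-order term because the OLS slope is $\sqrt n$-consistent; and the nonparametric (k-nearest-neighbor / windowing) regression of $u^2$ on $w$ is uniformly consistent for $\sigma^2(w)$ under the smoothness and moment bounds of Assumption~\ref{h_assumption}. Together with the assumed positivity of $\sigma^2$, this yields $\hat\sigma^2(w)\to\sigma^2(w)$ uniformly (on compacts, with probability tending to one) and $\hat\sigma^2$ bounded away from zero, so that the feasible weight matrix $\hat W$ approximates the true $W$ uniformly.

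Third, I would compare the feasible estimator $\hat\beta_{\mathrm{FWLS}}=(Z^{\mathsf{T}}\hat W Z)^{-1}Z^{\mathsf{T}}\hat W X$ with the infeasible $\hat\beta_{\mathrm{WLS}}$. Expanding the closed form and substituting the uniform weight bound should show $\sqrt n(\hat\beta_{\mathrm{FWLS}}-\hat\beta_{\mathrm{WLS}})=o_p(1)$, which is exactly Robinson's asymptotic equivalence theorem. Since Lemma~\ref{lemma_wls} guarantees that $\hat\beta_{\mathrm{WLS}}$ is consistent, asymptotically normal, and attains the efficient covariance matrix, these properties transfer to $\hat\beta_{\mathrm{FWLS}}$, giving the claim. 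The main obstacle is precisely this last negligibility step: controlling the product of the nonparametric estimation error with the score terms uniformly, while keeping $\hat\sigma^2$ bounded below, and choosing the neighborhood size so that the bias and variance of $\hat\sigma^2$ both vanish quickly enough. This is where the scalar nature of $w$ from Assumption~\ref{weight_assumption} and the bandwidth conditions of Assumption~\ref{h_assumption} do the real work; since the paper defers to \citet{robinson1987asymptotically}, I would verify that his regularity conditions hold in our Gaussian, linear-$f_i$ setting rather than reprove the equivalence from scratch.
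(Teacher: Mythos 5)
Your proposal is correct and follows essentially the same route as the paper: the paper offers no independent argument for this lemma but defers entirely to \citet{robinson1987asymptotically}, whose proof proceeds exactly as you outline (reduction to a scalar conditioning variable, consistency of the two-stage nearest-neighbour variance estimator, and asymptotic equivalence of feasible and infeasible WLS under his conditions (3.3)--(3.5), which the paper imports wholesale as Assumption~\ref{h_assumption}). Your sketch is in effect a faithful reconstruction of the cited proof together with the casting of the SCM into Robinson's framework, which is all the paper itself does.
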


\begin{assumption}[Technical assumptions] \label{technical_assumption}
See Supplement \ref{proof_thm1}.
\end{assumption}

We now state the first main theorem.

\begin{theorem} \label{main_thm1}
Under the assumptions \ref{weight_assumption} - \ref{technical_assumption} ParCorr-WLS CI test with estimated weights (Algorithm~\ref{pseudocode}) is consistent for testing the conditional independence between two potentially heteroskedastic variables $X$ and $Y$ conditioned on a set of variables $Z$.
\end{theorem}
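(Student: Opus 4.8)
The plan is to establish the two ingredients that together constitute consistency of a CI test: (i) correct asymptotic \emph{size}, i.e.\ under the null $H_0: X \perp Y \mid Z$ the rejection probability converges to the nominal level $\alpha$, and (ii) \emph{power} tending to one, i.e.\ under the alternative $H_1$ the rejection probability converges to $1$. The overall strategy is a reduction argument: I would first analyze the \emph{infeasible} test that uses the true (oracle) weights $1/\sigma_i^2$, and then show that replacing them by the nonparametrically estimated weights of Algorithm~\ref{pseudocode} perturbs the studentized statistic $T(\hat\rho)$ only by a term that is $o_p(1)$, so that both versions share the same limiting behavior.

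For the infeasible step I would invoke Lemma~\ref{lemma_wls}: with weights equal to the reciprocal of the conditional error variance, WLS is equivalent to OLS applied to the transformed model in which each equation of SCM~\eqref{simple_scm} is divided by its noise scaling function. By Assumption~\ref{weight_assumption} this transformation is well defined, and by the Gaussianity of $N_X, N_Y$ the weighted residuals are, conditionally on the design, \emph{exactly} i.i.d.\ standard normal and homoskedastic. On this homoskedastic Gaussian model the classical theory of partial correlation applies: under $H_0$ the studentized statistic is exactly $t(n-2-k)$ distributed, giving size $\alpha$; under $H_1$ the population partial correlation between the weighted residuals of $X$ and $Y$ is bounded away from zero (it is driven by the shared component $cE_t$ with $c\neq 0$), so $\hat\rho$ converges to a nonzero limit and $|T(\hat\rho)|\to\infty$, yielding power one.

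The feasible step is where the estimated weights enter, and here I would use Lemma~\ref{lemma_feasible_wls}: under Assumptions~\ref{weight_assumption}--\ref{h_assumption} the feasible WLS estimator using the reciprocal of the Robinson-type nonparametric variance estimate is consistent and attains the correct asymptotic covariance. The task is to propagate this from the estimator to the statistic. I would write the feasible weighted residuals $\hat r_X,\hat r_Y$ as the infeasible ones plus a correction governed by the estimation error of the conditional-variance functions, show — using Assumptions~\ref{h_assumption} and \ref{technical_assumption} — that the sample quantities $\widehat{\operatorname{Cov}}(\hat r_X,\hat r_Y)$, $\widehat{\operatorname{Var}}(\hat r_X)$ and $\widehat{\operatorname{Var}}(\hat r_Y)$ converge to the same limits as their oracle counterparts, and then conclude by the continuous mapping theorem and Slutsky's lemma that the feasible $T(\hat\rho)$ has the same asymptotic null distribution, and the same divergence under $H_1$, as the infeasible one.

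I expect the main obstacle to be precisely this plug-in propagation: controlling, uniformly and at a fast enough rate, the nonparametric weight-estimation error and carrying it through two \emph{coupled} WLS regressions and the nonlinear ratio defining $T(\hat\rho)$. Concretely, I would need the variance-estimation error to vanish quickly enough that the induced perturbation of $\hat\rho$ is $o_p(n^{-1/2})$, so that it does not disturb the $\sqrt{n}$-scale limiting behavior; this is exactly what Assumption~\ref{h_assumption} (Robinson's conditions (3.3)--(3.5)) and the technical Assumption~\ref{technical_assumption} are designed to supply. A secondary subtlety is that the final statistic depends on residuals from two regressions with \emph{different} estimated weight functions, so the joint asymptotics of the two residual sequences, rather than each in isolation, must be controlled.
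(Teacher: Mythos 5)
Your plan is sound and its ingredients are the right ones, but it takes a genuinely different route from the paper. You argue by an oracle-plus-perturbation reduction: first establish exact size and diverging power for the infeasible test with true weights (via Lemma~\ref{lemma_wls} and the classical homoskedastic Gaussian theory), then show the plug-in of estimated weights perturbs $T(\hat\rho)$ by $o_p(1)$ (at rate $o_p(n^{-1/2})$ in $\hat\rho$ for the null). The paper instead works \emph{directly} with the feasible statistic: it writes out the OLS residuals $r_i^X = (\alpha-\hat\alpha)Z_i + \varepsilon_i^X$, computes $\mathbb{E}[(r_i^X)^2\mid Z]$, establishes convergence of the nearest-neighbour smoothed variance estimates $\hat\sigma_{X,i}^2=\frac{1}{\lambda}\sum_{r_j\in\mathcal{S}_i}r_j^2$ to $\sigma_{X,i}^2$ (citing Hall--Carroll and assuming in \ref{technical_assumption} that the window averages converge to the right values, with $\lambda\to\infty$, $\lambda/n\to 0$), and then derives explicit asymptotic first, second and fourth conditional moments of the feasible weighted residuals $R_i^X, R_i^Y$ so that a law of large numbers plus Slutsky gives $\hat\rho_{X,Y|Z}\to\rho_{X,Y|Z}$ in probability. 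What your decomposition buys is conceptual clarity: you cleanly separate size from power and you correctly flag that size control needs the weight-estimation error to be negligible at the $\sqrt{n}$ scale of the studentized statistic --- a point the paper's sketch largely absorbs into Lemma~\ref{lemma_feasible_wls} (``attains the correct covariance matrix asymptotically'') and Assumption~\ref{technical_assumption} rather than proving; strictly, the paper's displayed argument establishes consistency of $\hat\rho$ (hence power one) and leaves the null calibration implicit. What the paper's direct computation buys is that it never needs to construct or analyze the oracle test at all, and it handles your ``coupled regressions'' worry automatically, since the limits of $\tfrac{1}{n}\sum_i R_i^XR_i^Y$, $\tfrac{1}{n}\sum_i (R_i^X)^2$ and $\tfrac{1}{n}\sum_i (R_i^Y)^2$ are computed jointly from the explicit residual formulas. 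Neither route is complete without the rate conditions you identify; the paper supplies them by assumption (Robinson's (3.3)--(3.5) and (A1)--(A5)) exactly as you anticipated.
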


\RestyleAlgo{ruled}
\begin{algorithm}
\caption{ParCorr-WLS}\label{pseudocode}
\KwData{Expert knowledge $\mathcal{E}$ as map with keys $X$ and $Y$ and values in \{false, `sampling index', `heteroskedastic parent $H$'\}, where $H \in V$, observational data with sample size $n$ for nodes $X$, $Y$, $Z_1, \ldots, Z_k$, $H$, window length $\lambda$, significance level $\alpha$}
\KwResult{boolean indicating whether there is a (conditional) dependence between $X$ and $Y$ given $Z_1, \ldots, Z_k$}
$resid := []$\;
 \For{$node$ in \{$X$, $Y$\}}{
 \eIf{$k == 0$}
    {
   $\tilde{r} = node$
   }
 {
   obtain residuals $\tilde{r} = (\tilde{r}_1, \ldots, \tilde{r}_n)$ by regressing $node$ on $Z_1, \ldots, Z_k$ using OLS\;}

   \uIf{$\mathcal{E}[node]$ == `false'}{
   append $\tilde{r}$ to $resid$\;
  }
  \uElseIf{$\mathcal{E}[node]$ == `sampling index'}{
   compute weights $w_i = (\frac{1}{\lambda} \sum_{j = \max(1, i-\frac{\lambda}{2})}^{\min(n, i+\frac{\lambda}{2})} \tilde{r}_j^2)^{-1}$, for $i=1, \ldots, n$\;
   obtain residuals $r$ by regressing $node$ on $Z_1, \ldots, Z_k$ using WLS with the weights $w$\;
   append $w \cdot r$ to $resid$\;
   }

  \uElse{
   sort $\tilde{r}$ such that their corresponding values of $H$ increase\;
   compute weights $w_i = (\frac{1}{\lambda} \sum_{j = \max(1, i-\frac{\lambda}{2})}^{\min(n, i+\frac{\lambda}{2})} \tilde{r}_j^2)^{-1}$, for $i=1, \ldots, n$\;
   revert the sorting in the indices of $w$\;
   obtain residuals $r$ by regressing $node$ on $Z_1, \ldots, Z_k$ using WLS with the weights $w$\;
   append $w \cdot r$ to $resid$\;
   }
   }
   calculate studentized Pearson correlation $t$ between $r_X = resid[0]$ and $r_Y = resid[1]$\;
   perform $t$-test with (two-sided) significance level $\alpha$, i.e. reject if $|t| > t(1-\frac{\alpha}{2}, n-2-k)$

\end{algorithm}


\subsection{Extension to the PC algorithm}
\label{sec_extension}
A well-known and widely used algorithm for discovering causal relationships in terms of the completed partially directed acyclic graph (CPDAG) from observational data is the PC algorithm as introduced in \citet{Spirtes1991}. It consists of two phases: The first one is concerned with learning the skeleton of adjacencies based on iterative CI testing.
Subsequently, a set of rules is applied to determine the orientation of the found links.

To ensure consistency of this method, the following assumptions have to be fulfilled. Details can be found in \citet{sprites2000}. Let $G = (V, E)$ be a graph consisting of a set of vertices $V$
and a set of edges $E \subset V \times V$. Let $\mathcal{P}$ denote the probability distribution of $V$.

\begin{assumption}[PC algorithm] \label{pc_assumptions}
The Causal Markov condition, Causal Faithfulness, and Causal Sufficiency are fulfilled for the SCM~\eqref{scm_hs} with graph $G$.
\end{assumption}

Under these assumptions and if the utilized conditional independence test is consistent, it can be shown that the PC algorithm converges in probability to the correct causal structure, i.e.
\[
\lim_{n \to \infty}\mathcal{P}(\hat{G}_n \neq G) = 0,
\] where $G$ denotes the ground truth CPDAG and $\hat{G}_n$ is the finite sample output of the PC algorithm. See \citet{kalisch2007} for a proof.

In the following, we need to further restrict Assumption~\ref{weight_assumption} to prove consistency of the PC algorithm under heteroskedasticity.
\begin{assumption}[Heteroskedastic relationships regarding PC algorithm] \label{weight_assumption_pc}
Assumption~\ref{weight_assumption} is further limited to the case where there is only sampling index dependent heteroskedasticity or homoskedasticity.
\end{assumption}

Given Assumption \ref{weight_assumption_pc}, we can apply our proposed method ParCorr-WLS in every CI test of the PC algorithm.
Using Lemma \ref{lemma_wls}, we thus can establish consistency of the PC algorithm with ParCorr-WLS if the true weights are known.
Lemma \ref{lemma_feasible_wls} yields the same result for ParCorr-WLS with estimated weights.


\begin{theorem} \label{main_thm2}
Under Assumptions~\ref{h_assumption},\ref{technical_assumption},\ref{pc_assumptions},\ref{weight_assumption_pc} the output of the PC algorithm, with ParCorr-WLS as a CI test, converges in probability to the correct causal graph.
\end{theorem}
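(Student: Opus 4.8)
The plan is to reduce Theorem~\ref{main_thm2} to the consistency of the ParCorr-WLS test established in Theorem~\ref{main_thm1} and then invoke the known consistency meta-result for the PC algorithm. The excerpt already records that, under Assumption~\ref{pc_assumptions} (Causal Markov condition, Faithfulness, and Sufficiency) and provided the employed CI test is consistent, the PC output satisfies $\lim_{n \to \infty}\mathcal{P}(\hat{G}_n \neq G) = 0$ (see \citet{kalisch2007}). Hence the entire burden of the proof is to verify that, under the present hypotheses, ParCorr-WLS is a consistent CI test \emph{for every} ordered pair of variables and \emph{every} conditioning set that the skeleton phase may query. I would therefore split the argument into two halves: first, consistency of the individual CI test; second, the passage from a single consistent test to a consistent PC run.

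For the first half, I would observe that Assumption~\ref{weight_assumption_pc} is strictly a restriction of Assumption~\ref{weight_assumption}, so the hypotheses of Theorem~\ref{main_thm1} (together with Assumptions~\ref{h_assumption} and \ref{technical_assumption}) are in force for each test. The crucial point that distinguishes this theorem from a mere corollary of Theorem~\ref{main_thm1} is that the PC algorithm calls the test with the conditioning set $Z$ ranging over many subsets, whereas Theorem~\ref{main_thm1} concerns a fixed configuration. The restriction to sampling-index-dependent (or constant) heteroskedasticity in Assumption~\ref{weight_assumption_pc} is exactly what makes the test valid uniformly over these calls: the scaling $h_i$ is then a deterministic function of the exogenous index $t$ rather than of a random parent, so the index-window weights of Algorithm~\ref{pseudocode} can be formed irrespective of which variables lie in $Z$, and, by Lemmas~\ref{lemma_wls} and \ref{lemma_feasible_wls}, the weighted residuals are asymptotically homoskedastic and Gaussian. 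Consequently the transformed system remains jointly Gaussian, so for every $Z$ the vanishing of the weighted partial correlation is equivalent to the corresponding $d$-separation, and the studentized statistic attains its nominal $t(n-2-k)$ null behaviour. I would spell out this equivalence, since it is what ties the statistical null back to the graphical separation required by the Markov and Faithfulness conditions.

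For the second half, I would make the finite-to-limit step explicit. For fixed dimension $d$ the skeleton phase issues only finitely many CI tests (bounded by the number of pairs times the number of admissible conditioning subsets). Theorem~\ref{main_thm1} gives, for each such test, that the probability of a wrong decision---reporting dependence when $X$ and $Y$ are $d$-separated by $Z$, or independence when they are not---tends to zero as $n \to \infty$ (for a significance level $\alpha_n$ shrinking suitably slowly, as in the standard analysis); a union bound over this finite collection then shows that the probability of at least one erroneous test tends to zero. On the complementary event all adjacency decisions, and hence the recovered skeleton, coincide with the $d$-separation structure of $G$, after which the deterministic orientation rules produce the correct CPDAG. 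This yields $\lim_{n\to\infty}\mathcal{P}(\hat{G}_n \neq G)=0$; alternatively one can simply feed the established CI-test consistency into the meta-result cited above.

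The main obstacle I anticipate is the uniformity argument of the second paragraph rather than any single estimation bound: the delicate part is arguing that re-weighting by index-based variance estimates continues to linearise the dependence structure for \emph{all} conditioning sets simultaneously, keeping the joint law Gaussian so that partial correlation still characterises conditional independence. This is precisely where parent-dependent heteroskedasticity would fail---there the multiplicative factor $h_i(H)$ is a function of a random variable, the weighted residuals need not be jointly Gaussian, and zero partial correlation may decouple from $d$-separation---so Assumption~\ref{weight_assumption_pc} is not a matter of convenience but is what the proof genuinely requires. The remaining quantitative content (asymptotic normality and the correct asymptotic covariance of the feasible WLS estimator with estimated weights) is already supplied by Lemmas~\ref{lemma_wls}--\ref{lemma_feasible_wls} and packaged into Theorem~\ref{main_thm1}, so I would cite those rather than re-derive them.
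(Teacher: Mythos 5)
Your proposal is correct and follows essentially the same route as the paper's own (very terse) proof: Assumption~\ref{weight_assumption_pc} guarantees ParCorr-WLS is applicable in every CI test the PC algorithm issues, Theorem~\ref{main_thm1} supplies consistency of each such test, and the meta-result of \citet{kalisch2007} (finitely many tests, union bound, deterministic orientation rules) converts this into convergence of $\hat{G}_n$ to $G$. Your additional discussion of why sampling-index heteroskedasticity is needed for uniformity over conditioning sets --- namely that parent-dependent scaling can propagate indirectly through unconditioned ancestors and break both Gaussianity and the expert-knowledge assumption --- is exactly the obstruction the paper itself identifies in its supplementary discussion of more general heteroskedasticity forms, so it is a faithful elaboration rather than a different argument.
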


Note that we prove consistency of the PC algorithm with ParCorr-WLS only in the case of sampling index dependent heteroskedasticity. We discuss the challenges to extending this to more general types of heteroskedasticity in section \ref{sec_discussion}.


\section{Experiments} \label{sec_experiments}

In the following, we conduct experiments evaluating our proposed CI test separately and in conjunction with the PC algorithm. Throughout the experiments, heteroskedasticity strength refers to the parameter $s$ in the scaling functions $h$ of linear and periodic type given by
\begin{align} \label{hs_types}
\begin{split}
h(x) &= 1 + s e^T x \cdot \mathbb{1}_{x\geq 0}~\quad~\text{(linear)}\\
h(x) &= 1 + s e^T  \sin(x) + s~\quad~\text{(periodic)}\,.
\end{split}
\end{align}
In other words, in case of linear heteroskedasticity strength is the slope of the variance function, and for periodic heteroskedasticity it refers to the amplitude of the variance function.

\subsection{Conditional independence testing}
We generate the data from the SCM~\eqref{simple_scm} where we consider various types of heteroskedasticity, i.e. functions $h$, namely linear and periodic as given in Eq.~\eqref{hs_types}. Plots of the simulated data are provided in the Supplement. Each of the types can either be $Z$- or sampling index-dependent. A visualization of the considered heteroskedasticity-types can be found in the Supplement \ref{additional_plots}.


We use the Kolmogorov-Smirnov
(KS) statistic to quantify how uniform the
distribution of p-values is, and therefore as a metric for type-I errors, as in \citet{runge2018}.  Type-II errors are measured by the area under the power curve (AUPC). The metrics were evaluated with a sample size of $500$ from $100$ realizations of the SCM. Error bars indicate the bootstrapped standard errors.

\begin{figure} 
  \centering
  \includegraphics[width=0.48\linewidth]{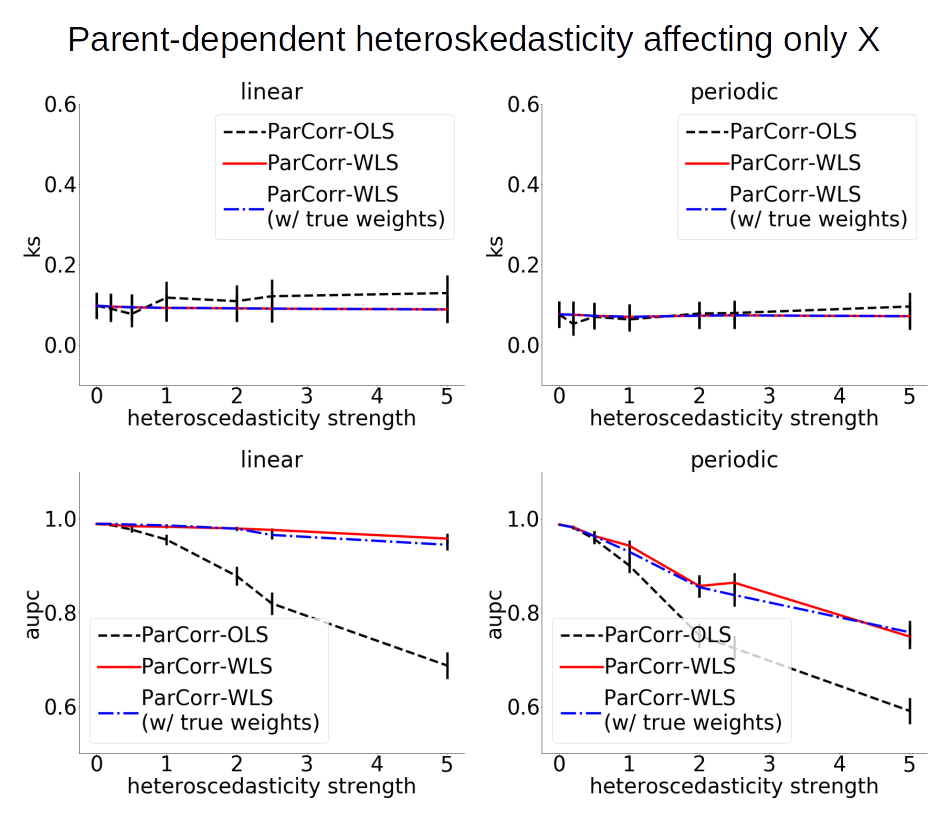}
  \quad
  \includegraphics[width=0.48\linewidth]{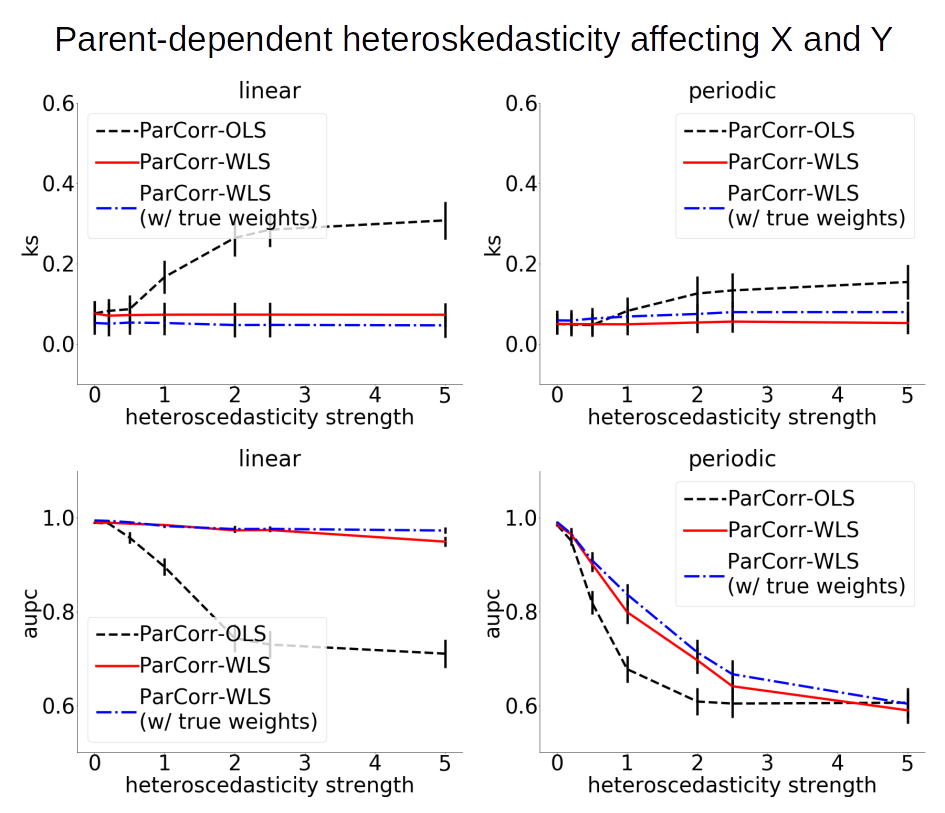}

  \caption{Performance of partial correlation CI tests for dependence ($c=0.5$) and conditional independence ($c=0$) between $X$ and $Y$ given $Z$. In all plots the heteroskedasticity is a function of the confounder $Z$ and only affects $X$ (left two columns) or both $X$ and $Y$ (right two columns). Shown are KS (top row) and AUPC (bottom row) for  different strengths of heteroskedasticity for linear (left), periodic (right) noise scaling functions. The ground truth weights are used for WLS, or the weights are estimated using the window approach with window length $10$ as detailed in section \ref{sec_estimation}.  A sample size of $500$ is used and the experiments are repeated $100$ times. }
  \label{fig_res_simple_gt}
\end{figure}

Figure \ref{fig_res_simple_gt} shows that ParCorr-WLS is well calibrated in the presence of heteroskedasticity, regardless if it affects only one or both of the variables $X$ and $Y$. On the other hand, the ParCorr-OLS test becomes ill-calibrated in the case of heteroskedastic $X$ and $Y$ as expected by Effect~\ref{lemma_effect1}. In particular, this means that if we can choose the weights reasonably well, we are able to overcome multiplicative confounding with our proposed CI test.

Regarding power as measured by AUPC, we observe a rather rapid decrease for ParCorr-OLS as the heteroskedasticity increases (compare to Effect \ref{lemma_effect2}). Our proposed method ParCorr-WLS has higher power in the heteroskedastic scenario, and even for homoskedastic noise (heteroskedasticity strength equal zero) the power is comparable to that of ParCorr-OLS. The drop in power for ParCorr-WLS can be explained by an overall increase in noisiness of the data as heteroskedasticity strength is growing. Refer to the Supplement \ref{additional_plots} for a plot of AUPC of ParCorr-OLS and ParCorr-WLS on data with homoskedastic but increasing noise.

Similar effects are present for all considered types of heteroskedasticity (see Supplement).
The results remain very similar if we do not use the ground truth weights but estimate them using the window approach with a reasonable window length as detailed in Section~\ref{sec_estimation}.

  

\subsection{Causal discovery}

To test our proposed CI test in a more realistic setting, we apply it within the PC algorithm to recover the causal graph from simulated observational data. We build upon the PC-stable algorithm implementation within the Tigramite software package \citep{runge2019detecting} which is published under the GNU General Public License.

The data for these experiments was generated with SCM~\eqref{scm_hs} in the following way. 
Given a random ground truth graph, we fix a percentage of heteroskedasticity-affected nodes which are then selected uniformly at random. Throughout the experiments this percentage is set to $0.3$ to reduce the chance of parent and child being affected by the same kind of heteroskedasticity.
For these affected nodes, we choose as a heteroskedasticity type linear or periodic with equal probability (Eq.~\eqref{hs_types}) and let the noise variance $h$ either depend on one randomly selected parent or the sampling index. We also set a fixed strength $s$ per experiment and investigated the effect of increasing the strength on the performance of our method compared to the PC algorithm with ParCorr-OLS. All linear dependencies have a coefficient $c=0.5$.
We tested for various small to medium sized causal graphs (see also Supplement).

In these experiments, we estimate the weights based on the expert knowledge as required by Assumption~\ref{weight_assumption}, namely which node is affected by heteroskedasticity, and whether the noise variance depends on the sampling index or another node. In the case of the heteroskedasticity depending on another node, we also need to know which node it is.
We also compare with the PC algorithm which uses ParCorr-WLS based on the ground truth weights for all direct heteroskedastic relations. Note, however, that this does not take care of indirect heteroskedasticity due to heteroskedastic parents that are not part of the conditioning set.
Note that this challenging setting is outside of the stricter Assumption~\ref{weight_assumption_pc} for which the PC algorithm is consistent.

Figure \ref{fig_res_pc} shows, similar to the experiments of the previous section, that the PC algorithm with ParCorr-WLS continues to have a rather small false positive rate (FPR) even though the heteroskedasticity strength increases. In contrast, ParCorr-OLS shows an increase in FPR as the heteroskedasticity strength increases. 
Additionally, even in this rather complicated setting, we see that the true positive rate (TPR) can be improved by using our method. The PC algorithm with ParCorr-WLS has an average runtime of $0.29$ seconds on homoskedastic data compared to $0.14$ seconds for ParCorr-OLS evaluated on AMD 7763.

\begin{figure}  
  \centering
  \includegraphics[width=\linewidth]{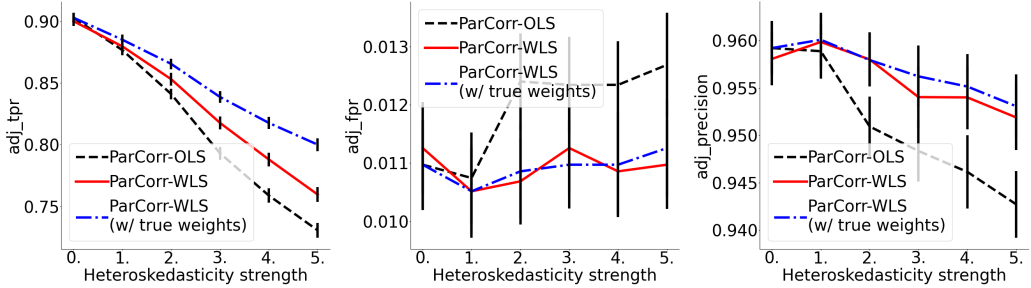}
  \caption{Results for the PC algorithm with the standard ParCorr-OLS CI test compared to that with the proposed ParCorr-WLS test. Shown are adjacency TPR (left) and FPR (middle), as well as the adjacency precision (right) for increasing strengths of heteroskedasticity. The graph has $10$ nodes and $10$ edges, a sample size of $500$ is used. The significance level $\alpha$ is set to $0.05$. The experiment is repeated $500$ times. Errorbars show standard errors. Estimated weights with a window length of $5$ or ground truth weights are used for WLS.}
  \label{fig_res_pc}
\end{figure}

\section{Discussion and Outlook} \label{sec_discussion}
In this work we relaxed the common assumption of a constant variance by explicitly allowing for heteroskedasticity as a multiplicative scaling of noise in an otherwise linear SCM. Our proposed partial correlation test based on weighted least squares regression is a linear, computationally fast and easy to implement method and constitutes a useful standalone method for various data science tasks, but our focus here is on its use in causal discovery. The main \textbf{strengths} of our approach are that it is able to produce more reliable results both as a CI test and in causal discovery in the presence of heteroskedasticity than the standard partial correlation variant. More reliable results here refers to controlled false positives as well as higher detection power. Furthermore, the suggested adaptations do not compromise calibratedness and power of the CI test on homoskedastic data.

The main \textbf{weakness} of our method is that the CI test requires substantial expert knowledge in Assumption~\ref{weight_assumption} and for the consistency of the PC algorithm the even stricter Assumption~\ref{weight_assumption_pc}. Assumption~\ref{weight_assumption} requires that the heteroskedasticity only depends on one of the predictors or on the sampling index, i.e. its domain is one dimensional. Furthermore, one needs to know which of these types is the case. Assumption~\ref{weight_assumption_pc} only allows for sampling-index heteroskedasticity. The reason for this is that there can be indirect heteroskedasticity at the node $X$ or $Y$ that is not induced by a parent of $X$ or $Y$, but by some other ancestor. This heteroskedasticity then propagates through the causal graph and essentially makes $X$ or $Y$ heteroskedastic whenever this path is not blocked by the conditioning set. In this case, the expert knowledge does not tell us about this heteroskedasticity and we would not be able to apply ParCorr-WLS to remove it. See also Section~\ref{sec_different_hs_forms} in the Supplement for a detailed treatment of cases in which the CI tests within the PC algorithm are consistent under more general heteroskedasticity forms. In the following, we discuss alternatives and further avenues of research.

\textbf{In future work} one may alter the required expert knowledge and weight approximation scheme to overcome the issues induced by indirect heteroskedasticity. Here, a possible remedy could be an iterative approach using information about causal relationships from earlier steps.
An alternative would be to use the CMIknn CI test \citep{runge2018} that is able to fully treat heteroskedastic multiplicative confounding. However, this increased generality comes at the price of reduced detection power, also see Figure \ref{cmiknn} in the Supplement. 

Another open question is how to further improve the weight approximation method. For instance, by iteratively repeating the regression and smoothing steps. Another important consideration is the choice of the window length. Linear properties of the variance function allow us to use a larger window length. However, if the variance function shows a high variability, crucial information might be lost if the chosen window length is too large. Model selection criteria might be employed to alleviate this problem.

Furthermore, one can consider extensions of ParCorr-WLS to multiple heteroskedastic influencing factors, e.g., \citet{spokoiny2002} extend the residual-based conditional variance function estimation to dimensions larger than one. Another multidimensional method based on differences is discussed in \citet{cai2009}.

It would also be interesting to explore using generalized least squares to be able to account for additional correlation of the residuals. Potentially, the weighted least squares partial correlation coefficient could also be combined with a permutation-based test to extend the method to problems with non-Gaussian noise.

\textbf{Concluding}, our proposed ParCorr-WLS CI test makes constraint-based causal discovery methods better applicable to real world problems where the assumption of homoskedasticity is violated, such as in climate research or neuroscience. Ethically, we believe that our rather fundamental work has a low potential for misuse.

\acksection
WG was supported by the Helmholtz AI project \emph{CausalFlood}. UN and JW were supported by grant no. 948112 \emph{Causal Earth} of the European Research Council (ERC). This work used resources of the Deutsches Klimarechenzentrum (DKRZ) granted by its Scientific Steering Committee (WLA) under project ID bd1083. We thank the anonymous reviewers for their helpful comments.

\bibliographystyle{abbrvnat}

{
\small
\bibliography{references}
}


\newpage
\appendix

\section{Supplement}
Here we provide proofs of the statements made in the main text as well as further figures of numerical experiments and a more detailed discussion of heteroskedasticity effects regarding causal discovery.

\subsection{Proof of Effects 1 and 2}\label{proof_effects}
\subsubsection{Proof of Effect \ref{lemma_effect1}:} 
\begin{proof}
Let $(X_i, Y_i)_{i=1, \ldots, n}$ be an independent sample 
with Pearson correlation coefficient $\rho$, and we assume the linear model $Y_i= X_i \beta + h(Z_i)\epsilon_i$, where $Z_i$ and $\epsilon_i$ are independent and standard normal, and $h$ is the noise scaling function.

Note that w.l.o.g.\ we assume only $Y$ to be heteroskedastic w.r.t.\ $Z$.

Testing whether the Pearson correlation between $X$ and $Y$ is zero is equivalent to testing whether the slope parameter $\beta$ is equal to zero.

Under the null hypothesis, calculating $\mathrm{Var}(h(Z_i)\epsilon_i | X_i)$ gives 
\[
\mathrm{Var}(h(Z_i)\epsilon_i | X_i)= \mathbb{E}[h^2(Z_i)\epsilon^2_i | X_i] = \mathbb{E}[h^2(Z_i)\epsilon^2_i] = \mathbb{E}[h^2(Z_i)] = \mathbb{E}[h^2(Z_1)] 
\]
using the independence of $Z_i$ and $X_i$, the fact that $\epsilon_i$ is standard normal, and that the $Z_i$ are identically distributed. Therefore, this is a homoskedastic problem.


\end{proof}

\subsubsection{Discussion of Effect \ref{lemma_effect2}:}

We start by discussing the homoskedastic case to see where non-constant variance of noise leads to problems within the t-test.
Let $(X_i, Y_i)_{i=1, \ldots, n}$ be an independent sample from a bivariate normal distribution with Pearson correlation coefficient $\rho$, and we assume the linear model $Y_i= X_i \beta_0 + \epsilon_i$, where $\epsilon_i$ is standard normal, then 
\[
\beta_0 = \rho \sqrt{ \frac{\mathrm{Var}(Y)}{\mathrm{Var}(X)}}.
\]
This also is true for the finite sample estimators of these entities, since
\[
\hat{\rho} = \frac{\sum X_i Y_i}{\sqrt{\sum X_i^2 \sum Y_i^2}}
\]
and
\[
\hat{\beta} = \beta_0 + \frac{\sum X_i \varepsilon_i}{\sum X_i^2} = \beta_0 + \frac{\sum X_i (Y_i - X_i \beta_0)}{\sum X_i^2} = \frac{\sum X_i Y_i}{\sum X_i^2} = 
\hat{\rho} \sqrt{\frac{\sum Y^2_i}{\sum X^2_i}}.
\]

Furthermore, the following relationship holds for the estimator of the studentized correlation coefficient and the estimator of the slope parameter $\beta_0$
\begin{equation} \label{eq_rho_beta}
    \hat{\rho} \frac{\sqrt{n-2}}{\sqrt{1- \hat{\rho}^2}} = \hat{\beta} \sqrt{ \frac{\sum X_i^2}{\frac{1}{n}\sum (Y_i - \hat{\beta} X_i)^2}}.
\end{equation}

For homoskedastic noise the second factor is an estimator of the standard error of $\hat{\beta}$, which we derive by using the mean of the squared residual as an estimator for the error variance. We know that
\[
\hat{\beta} = \frac{\sum X_i Y_i}{\sum X_i^2} = \frac{\sum X_i (\beta_0 X_i + \epsilon_i)}{\sum X_i^2} = \beta_0 + \frac{\sum X_i \epsilon_i}{\sum X_i^2}
\]
and thus
\begin{equation} \label{eq_true_var}
    \mathrm{Var}(\hat{\beta}) = \mathrm{Var}\big(\frac{\sum X_i \epsilon_i}{\sum X_i^2}\big).
\end{equation}

If we assume the design to be fixed and $\mathrm{Var}(\epsilon_i) = \sigma^2$ to be constant, this simplifies to
\[
\mathrm{Var}(\hat{\beta}) = \frac{\sum X_i^2 \mathrm{Var}(\epsilon_i)}{(\sum X_i^2)^2}= \frac{\sigma^2 \sum X_i^2 }{(\sum X_i^2)^2}.
\]
Therefore we use the following estimator
\begin{equation} \label{var_estimator}
    \widehat{\mathrm{Var}(\hat{\beta})} = \frac{\frac{1}{n}\sum (Y_i - \hat{\beta} X_i)^2}{\sum X_i^2}.
\end{equation}

We now assume that $Y$ is heteroskedastic, i.e. the model takes the form
\[
Y_i = X_i \beta_0 + h(Z_i)\epsilon_i,
\]

We need to show 
\[
\mathbb{E}\left[ \frac{\hat{\beta}^\text{OLS}}{\sqrt{\widehat{\mathrm{Var}(\hat{\beta}^\text{OLS})}}} \right] \leq \mathbb{E}\left[ \frac{\hat{\beta}^\text{WLS}}{\sqrt{\widehat{\mathrm{Var}(\hat{\beta}^\text{WLS})}}} \right],
\]
where $\hat{\beta}^\text{OLS}$ is the OLS estimator for the slope parameter and $\hat{\beta}^\text{WLS}$ is its WLS version.

Since OLS and WLS generally return similar values for the slope estimator (especially for large $n$) and both are unbiased estimators for $\beta_0$, we focus on the denominator.

We get $\mathbb{E}\left[ \frac{\beta_0}{\sqrt{\widehat{\mathrm{Var}(\hat{\beta}^\text{OLS})}}}\right] \leq \mathbb{E}\left[ \frac{\beta_0}{\sqrt{\widehat{\mathrm{Var}(\hat{\beta}^\text{WLS})}}}\right] $ from the reduced efficiency of the OLS regression on heteroskedastic data,
i.e.\ from
\[
\mathrm{Var}(\hat{\beta}^\text{OLS}) \geq \mathrm{Var}(\hat{\beta}^\text{WLS}),
\]
if we assume that the estimators (\ref{var_estimator}) for $\mathrm{Var}(\hat{\beta}^\text{OLS})$ and $\mathrm{Var}(\hat{\beta}^\text{WLS})$ are unbiased.
The unbiasedness is clear for the WLS-case, as well as for the case where the heteroskedasticity in $X$ is independent from the one in $Y$, see Effect \ref{lemma_effect1}. In the case of heteroskedasticity in $Y$ that is dependent on $X$, the estimator might be biased. If it is over-estimating $\mathrm{Var}(\beta)$, the power is also reduced. The situation of under-estimating the variance can, in fact, quasi-increase power but comes with the major drawback of increased probability of type I error.


\subsection{Proof of Theorem \ref{main_thm1}} \label{proof_thm1}
\subsubsection{Assumptions}

Consider random variables $X = (X_1,\dots,X_n)^T, Y = (Y_1,\dots,Y_n)^T$ and $Z^k = (Z^k_1,\dots,Z^k_n)^T$ for $k=1,\dots,d$, where $n$ is the number of samples. Denote the design matrix of the regressor variable by $Z =(Z^1,\dots,Z^d) \in \mathbb R^{n \times d}$. We assume that all variables are centered and generated according to the model
\begin{align*}
   X &= Z \cdot \alpha + \varepsilon^X \\
   Y &= Z \cdot \beta + \varepsilon^Y
\end{align*}
with error variables $\varepsilon^X = (\varepsilon_1^X,\dots, \varepsilon_n^X)^T, \varepsilon^Y = (\varepsilon_1^Y,\dots, \varepsilon_n^Y)^T$. We make the following assumptions:
\begin{enumerate}[label=(A\arabic*),ref=(A\arabic*)]
\item The error vectors $\varepsilon^X, \varepsilon^Y$ have zero mean and diagonal covariance matrices $\Sigma_X= \mathrm{diag}(\sigma_{X,1}^2,\dots, \sigma_{X,n}^2 )$, $\Sigma_Y= \mathrm{diag}(\sigma_{Y,1}^2,\dots, \sigma_{Y,n}^2 ) $. Similarly, for any $k$, the regressor variables $Z^k = (Z^k_1,\dots,Z^k_n)^T$ have diagonal covariance matrices.
\item The matrix $Z^T Z$ is almost surely positive definite, of full rank $d$, and we have that
\begin{align*}
\lim_{n \to \infty} \frac{Z^T Z}{n} = Q
\end{align*}
converges entrywise in probability to a matrix $Q$. Note that the $k,\ell$-entry of $Z^T Z$ is $\sum_{i=1}^n Z_i^k Z_i^{\ell}$.
\item We have well defined asymptotic average variances
\begin{align*}
\lim_{n \to \infty} \frac{1}{n} \sum_{i=1}^n \sigma_{X,i}^2 &= \sigma_{X}^2 \\
\lim_{n \to \infty} \frac{1}{n} \sum_{i=1}^n \sigma_{Y,i}^2 &= \sigma_{Y}^2 
\end{align*}
and the limits
\begin{align*}
\lim_{n \to \infty} \frac{1}{n} \sum_{i=1}^n (Z_i^k)^2  \sigma_{X,i}^2, \qquad \lim_{n \to \infty} \frac{1}{n} \sum_{i=1}^n (Z_i^k)^2  \sigma_{Y,i}^2 \qquad \text{for } k=1,\ldots,d
\end{align*}
exist in probability.
\item The variances of each variable are bounded by a constant independent of the sampling index $i$. 
\item \label{set_def} For parent-dependent heteroskedasticity, let $H$ be the heteroskedasticity-inducing parent. Denote by $\mathcal{S}_i$ the set of $\lambda$ nearest neighbours of $X_i$ (or $Y_i$ respectively) in $H$-value. In the case of sampling index dependent heteroskedasticity, $\mathcal{S}_i$ denotes the set of nearest neighbours in sampling index value. \\
We assume 
\[
\lambda \rightarrow \infty, \quad \frac{\lambda}{n} \rightarrow 0 \quad \text{as} \quad n \rightarrow \infty.
\]
Furthermore, the limit of the variances averaged over the nearest neighbours exists and converges to the right value, i.e.\
\[
\lim_{n \to \infty} \frac{1}{\lambda}\sum_{r_j \in \mathcal{S}_i}  \sigma_{X,j}^2 = \sigma_{X,i}^2, \qquad \lim_{n \to \infty} \frac{1}{\lambda}\sum_{r_j \in \mathcal{S}_i}  \sigma_{Y,j}^2 = \sigma_{Y,i}^2.
\]
\end{enumerate}



\subsubsection{Proof}
We will now sketch the proof of consistency of the estimator for the partial correlation $\rho_{X,Y|Z}$ in the case where we regress on only one variable, i.e. $d=1$. The result can however easily be generalized to higher dimensions although the computations become slightly more involved. Recall that in the first step of ParCorr-WLS, we compute the residuals 
\begin{align*}
r_i^X = (\alpha- \hat{\alpha}) Z_i + \varepsilon_i^X, \quad r_i^Y = (\beta- \hat{\beta}) Z_i + \varepsilon_i^Y,
\end{align*}
where $\hat{\alpha}, \hat{\beta}$ are OLS-estimates of $\alpha,\beta$. More precisely, recall that $\hat{\alpha}$ is given by the formula
\begin{align*}
\hat{\alpha} = \alpha + \frac{\sum_{i=1}^n Z_i \varepsilon_i}{\sum_{i=1}^n Z_i^2}.
\end{align*}
Hence,  using Slutsky's theorem, we compute that
\begin{align*}
\mathbb{E}[(r_i^X)^2|Z] = \frac{Z_i^2}{\sum_{j=1}^n Z_j^2} \left( \frac{\sum_{j=1}^n Z_j^2 \sigma_{X,j}^2}{\sum_{j=1}^n Z_j^2}+ \sigma_{X,i}^2 \right) + \sigma_{X,i}^2
\end{align*}
which converges in probability as $\frac{1}{n} (C \cdot Z_i^2 + \sigma_{X,i}^2)  + \sigma_{X,i}^2$ where $C$ is a constant. Recall the definition of the set of nearest neighbours $\mathcal{S}_i$ from assumption \ref{set_def}. Following the results in \citet{hall1989variance}, we establish the convergence of $\hat{\sigma}_{X,i}^2 = \frac{1}{\lambda} \sum_{r_j \in \mathcal{S}_i} r_j^2$. Furthermore, we see that the smoothed variances of the residuals behave asymptotically as
\begin{align*}
\hat{\sigma}_{X,i}^2 = \frac{1}{\lambda} \sum_{r_j \in \mathcal{S}_i} r_j^2 &\approx \frac{1}{\lambda} \sum_{r_j \in \mathcal{S}_i} \mathbb{E}[r_j^2 | Z] \\
&= \frac{1}{\lambda} \sum_{r_j \in \mathcal{S}_i} \left( \frac{C \cdot Z_j^2 + \sigma_{X,j}^2}{n}+ \sigma_{X,j}^2 \right) \\
&\approx \sigma_{X,i}^2.
\end{align*}


After estimation of the variances, we determine the residuals of the WLS regression with weight matrix $\hat{W}_X$ which is equivalent to the OLS regression after scaling all variables by $\hat{W}_X$. In other words, we define
\begin{align*}
R_i^X  =  (\alpha-\tilde{\alpha}) \hat{W}_X Z_i + \hat{W}_X \varepsilon_i,
\end{align*}
where $\tilde{\alpha}$ is the OLS-estimator w.r.t. the regressor $\hat{W}_X Z$ and the error variable $\hat{W}_X \tilde{\varepsilon}$. Define $\hat{W}_Y$ and $R_i^Y$ analogously.

The following asymptotic moment formulas can be computed from the definition of $R_i^X,R_i^Y$.

\begin{lemma} \label{lem.bounds}
Assume (A1)-(A5). Then $\mathbb{E}[R_i^X|Z] = 0$ and for large $n$
\begin{align*}
\mathbb{E}[\left(R_i^X\right)^2|Z] &\approx 1+ \frac{Z_i^2+2Z_i}{ \sum_{j=1}^n Z_j^2} \\
\mathbb{E}[\left(R_i^X\right)^4|Z] &\approx 3 + \frac{p(Z_i)}{\left(\sum_{j=1}^n Z_j^2\right)^4},
\end{align*}
where $p$ is a polynomial of degree $4$. In particular, by (A2),(A5) and the fact that for Gaussians higher moments are functions of the variance, $\mathrm{Var}(\left(R_i^X\right)^2) \leq K$ for some constant $K>0$ independent of $i$. The same statements hold for the residuals $R_i^Y$.
\end{lemma}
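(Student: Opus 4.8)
The plan is to exploit the defining feature of feasible WLS, namely that it coincides with OLS applied to the rescaled, approximately homoskedastic data. Writing $u_i = \hat{W}_X \varepsilon_i^X$ for the standardized errors and $\tilde{Z}_i = \hat{W}_X Z_i$ for the rescaled regressor, the WLS residual takes the explicit hat-matrix form
\begin{align*}
R_i^X = u_i - \tilde{Z}_i\,\frac{\sum_{j=1}^n \tilde{Z}_j u_j}{\sum_{j=1}^n \tilde{Z}_j^2},
\end{align*}
since $\tilde{\alpha} - \alpha = (\sum_j \tilde{Z}_j u_j)/(\sum_j \tilde{Z}_j^2)$ is just the OLS slope error in the transformed regression. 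The first step is to justify, using the consistency of $\hat{\sigma}_{X,i}^2$ established above from \citet{hall1989variance} and Assumption \ref{set_def}, that the $u_i$ may be treated as conditionally independent (Assumption (A1)) with conditional mean zero and conditional variance tending to one, up to an error from weight estimation that I will argue is asymptotically negligible and uniform in $i$.

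Given this reduction, the moment computations are direct. The identity $\mathbb{E}[R_i^X \mid Z] = 0$ follows immediately from $\mathbb{E}[u_j \mid Z] = 0$. For the second moment I would expand the square and use conditional independence to kill the cross terms, leaving a leading term $\mathbb{E}[u_i^2 \mid Z] \approx 1$ together with a leverage correction proportional to $\tilde{Z}_i^2/\sum_j \tilde{Z}_j^2$. Assumption (A2) gives $\sum_j Z_j^2 \sim nQ$ and Assumption (A4) bounds $Z_i^2$ uniformly in $i$, so this correction is of order $1/n$ uniformly, yielding the claimed form $1 + (Z_i^2 + 2Z_i)/\sum_j Z_j^2$.

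For the fourth moment I would use that the original noise is Gaussian and that $R_i^X$ is, conditionally on $Z$, a fixed linear combination of the independent Gaussians $u_j$, hence itself conditionally Gaussian. Therefore $\mathbb{E}[(R_i^X)^4 \mid Z] = 3\,\big(\mathbb{E}[(R_i^X)^2 \mid Z]\big)^2$, which expands to $3$ plus corrections that, collected over the common denominator $(\sum_j Z_j^2)^4$, take the stated polynomial form and are again $O(1/n)$ uniformly in $i$. Combining the two displays gives
\begin{align*}
\mathrm{Var}\big((R_i^X)^2\big) = \mathbb{E}[(R_i^X)^4 \mid Z] - \big(\mathbb{E}[(R_i^X)^2 \mid Z]\big)^2 = 2 + O(1/n),
\end{align*}
which is bounded by a constant $K$ independent of $i$; the statement for $R_i^Y$ is identical by symmetry.

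The main obstacle is the uniform control of the perturbation caused by using the estimated weights $\hat{W}_X$ in place of the true inverse standard deviations. The consistency results I rely on are pointwise in $i$, whereas the bound $\mathrm{Var}((R_i^X)^2) \leq K$ must hold uniformly; moreover $\hat{W}_X$ is itself a function of all the residuals, so it is correlated with the $u_j$ and the naive independence argument is not exact. I expect the cleanest route is to condition on $Z$ and on the smoothing neighbourhoods, treat the weight-estimation error as a higher-order term whose contribution to each moment is dominated using Assumptions (A4) and \ref{set_def}, and then verify that the resulting bounds are uniform in $i$.
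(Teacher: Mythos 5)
The paper itself gives no proof of this lemma --- it only asserts that the formulas ``can be computed from the definition'' of $R_i^X$ --- so your hat-matrix expansion of $R_i^X = u_i - \tilde{Z}_i\bigl(\sum_j \tilde{Z}_j u_j\bigr)/\bigl(\sum_j \tilde{Z}_j^2\bigr)$ followed by conditional moment computations and the Gaussian fourth-moment identity is exactly the intended route, and it does deliver the part of the lemma that is actually used later, namely $\mathrm{Var}\bigl((R_i^X)^2\bigr)\leq K$ uniformly in $i$.

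Two points deserve flagging. First, there is an internal inconsistency in your second-moment step: if the $u_j$ are conditionally independent with unit variance, the cross term contributes $-2\tilde{Z}_i^2/\sum_j\tilde{Z}_j^2$ and the quadratic term $+\tilde{Z}_i^2/\sum_j\tilde{Z}_j^2$, so your expansion yields the standard leverage formula $1-\tilde{Z}_i^2/\sum_j\tilde{Z}_j^2$, not the paper's $1+(Z_i^2+2Z_i)/\sum_j Z_j^2$; in particular the linear term $2Z_i$ cannot arise from a correction ``proportional to $\tilde{Z}_i^2/\sum_j\tilde{Z}_j^2$'' as you describe. You should not claim that your computation ``yields the claimed form''; rather, both expressions are $1+O(1/n)$ uniformly in $i$ by (A2) and (A4), which is all that is needed for the variance bound. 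Second, you correctly identify that the real difficulty is that $\hat{W}_X$ is built from the very residuals being rescaled, so the $u_i$ are neither exactly independent nor exactly unit-variance conditionally on $Z$, and the consistency of $\hat{\sigma}_{X,i}^2$ is pointwise while the bound must be uniform in $i$. You name this obstacle but do not close it; neither does the paper. A complete argument would need a uniform (over $i$) control of $\max_i|\hat{\sigma}_{X,i}^2-\sigma_{X,i}^2|$, e.g.\ via the lower bound on the variances implicit in (A4)--(A5) and the rate conditions $\lambda\to\infty$, $\lambda/n\to 0$, before the ``treat $u_i$ as i.i.d.\ standard Gaussian'' reduction is legitimate.
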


\begin{lemma}
Assume (A1)-(A5). Then, the estimator
\begin{align*}
\hat{\rho}_{X,Y|Z} = \rho(R^X,R^Y) = \frac{ \tfrac{1}{n}\sum_{i=1}^n R_i^X R_i^Y}{\sqrt{\tfrac{1}{n}\sum_{i=1}^n (R_i^X)^2}\sqrt{\tfrac{1}{n}\sum_{i=1}^n (R_i^Y)^2}}
\end{align*}
consistently estimates $\rho_{X,Y|Z}$.
\end{lemma}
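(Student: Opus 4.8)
The plan is to exhibit $\hat{\rho}_{X,Y|Z}$ as a ratio of three empirical averages and to prove convergence in probability of each factor separately, after which Slutsky's theorem and the continuous mapping theorem yield convergence of the ratio. Writing $S_n^{XY} = \tfrac1n\sum_{i=1}^n R_i^X R_i^Y$, $S_n^{XX} = \tfrac1n\sum_{i=1}^n (R_i^X)^2$ and $S_n^{YY} = \tfrac1n\sum_{i=1}^n (R_i^Y)^2$, so that $\hat{\rho}_{X,Y|Z} = S_n^{XY}/\sqrt{S_n^{XX} S_n^{YY}}$, it suffices to show $S_n^{XX}\xrightarrow{p}1$, $S_n^{YY}\xrightarrow{p}1$ and $S_n^{XY}\xrightarrow{p}\rho_{X,Y|Z}$. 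Since the limiting denominator equals $1$ and is bounded away from $0$, the map $(a,b,c)\mapsto a/\sqrt{bc}$ is continuous at the limit point and the conclusion follows.

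For the two denominator factors I would start from the conditional moment expansion of Lemma~\ref{lem.bounds}, which gives $\mathbb{E}[(R_i^X)^2\mid Z] \approx 1 + (Z_i^2 + 2 Z_i)/\sum_{j=1}^n Z_j^2$. Averaging over $i$ and using that the regressors are centered ($\tfrac1n\sum_i Z_i \to 0$) together with $\sum_i Z_i^2/(n\sum_j Z_j^2) = 1/n$, the conditional mean of $S_n^{XX}$ tends to $1$. To upgrade this to convergence in probability I would invoke the uniform variance bound $\mathrm{Var}((R_i^X)^2)\le K$ from Lemma~\ref{lem.bounds} and apply a law of large numbers for the triangular array $\{(R_i^X)^2\}$ via Chebyshev's inequality, showing $\mathrm{Var}(S_n^{XX})\to 0$; the argument for $S_n^{YY}$ is identical.

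For the numerator I would use that, by Lemma~\ref{lemma_feasible_wls} and the variance-estimation step (following \citet{hall1989variance}), the estimated weights $\hat{W}_X,\hat{W}_Y$ converge to the reciprocal per-sample standard deviations, so that $R_i^X$ and $R_i^Y$ behave asymptotically like the standardized errors $\varepsilon_i^X/\sigma_{X,i}$ and $\varepsilon_i^Y/\sigma_{Y,i}$ up to the negligible regression-estimation terms $(\alpha-\tilde{\alpha})\hat{W}_X Z_i$. Consequently $\mathbb{E}[R_i^X R_i^Y\mid Z]\approx \hat{W}_{X,i}\hat{W}_{Y,i}\,\mathrm{Cov}(\varepsilon_i^X,\varepsilon_i^Y)$, in which the estimated weights cancel the heteroskedastic scaling $\sigma_{X,i}\sigma_{Y,i}$ and leave the per-sample partial correlation; averaging over $i$ then gives $\mathbb{E}[S_n^{XY}\mid Z]\to\rho_{X,Y|Z}$, and the same Chebyshev-type argument promotes this to convergence in probability.

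The main obstacle is the law of large numbers for these dependent, heteroskedastic, non-identically-distributed triangular arrays: the residuals $R_i^X$ are coupled across $i$ both through the shared slope estimator $\tilde{\alpha}$ and, more delicately, through the nearest-neighbour variance smoothing, since for nearby indices the windows $\mathcal{S}_i$ overlap and induce correlated weights $\hat{W}_{X,i}$. I would control $\mathrm{Var}(S_n^{XX})=n^{-2}\sum_{i,i'}\mathrm{Cov}((R_i^X)^2,(R_{i'}^X)^2)$ by separating the diagonal terms, which contribute $O(1/n)$ by the bounded-variance estimate, from the off-diagonal terms: the slope-estimator coupling is of order $1/n$, while the smoothing-window coupling is nonnegligible only for the $O(n\lambda)$ pairs with overlapping windows. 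Using assumption~\ref{set_def}, in particular $\lambda/n\to 0$, the aggregate off-diagonal contribution is $O(\lambda/n)\to 0$, so $\mathrm{Var}(S_n^{XX})\to 0$ as required. Verifying this covariance bookkeeping, rather than the final Slutsky step, is where the real work lies.
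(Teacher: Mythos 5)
Your proposal is correct and follows essentially the same route as the paper's proof: decompose $\hat{\rho}_{X,Y|Z}$ into the three empirical averages, apply Slutsky's theorem together with the continuous mapping theorem, and use the conditional moment formulas and uniform variance bounds of Lemma~\ref{lem.bounds} to identify the limits as $1$, $1$, and $\rho_{X,Y|Z}$. The only substantive difference is that you make explicit the covariance bookkeeping behind the law of large numbers --- the $O(1/n)$ coupling through the shared slope estimator and the $O(\lambda/n)$ contribution of overlapping smoothing windows, which vanishes by (A5) --- whereas the paper compresses this into a one-line appeal to the law of large numbers and the bound $\mathrm{Var}\bigl((R_i^X)^2\bigr)\le K$.
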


\begin{proof}
By Slutsky's theorem, it suffices to prove that the three sums in the definition of $\hat{\rho}_{X,Y|Z}$ converge in probability and to compute the limits. Convergence follows from the law of large numbers and the variance bounds $\mathrm{Var}(\left(R_i^X\right)^2) \leq K, \mathrm{Var}(\left(R_i^Y\right)^2) \leq K'$ of Lemma \ref{lem.bounds}. The limit of the numerator is equal to $\lim_{n\to \infty} \frac{1}{n} \sum_{i=1}^n E[R_i^X R_i^Y] $, so we compute this quantity first conditioned on $Z$. This yields for large $n$
\begin{align*}
\frac{1}{n} \sum_{i=1}^n \mathbb{E}[R_i^X R_i^Y|Z] \approx \frac{1}{n} \sum_{i=1}^n \rho_{X,Y|Z} \left( 1+ \frac{3Z_i^2}{\sum_{j=1}^n Z_j^2} \right) \approx \rho_{X,Y|Z}.
\end{align*}
Hence  $\lim_{n\to \infty} \frac{1}{n} \sum_{i=1}^n \mathbb{E}[R_i^X R_i^Y] = \rho_{X,Y|Z}.$ To compute $\lim_{n\to \infty} \tfrac{1}{n}\sum_{i=1}^n \mathbb{E}[(R_i^X)^2]$ we use the first formula of Lemma \ref{lem.bounds} to see that
\begin{align*}
\frac{1}{n} \sum_{i=1}^n \mathbb{E}[(R_i^X)^2|Z] \approx 1 + \frac{1}{n} \sum_{i=1}^n \left( \frac{Z_i^2+2Z_i}{ \sum_{j=1}^n Z_j^2} \right) = 1+ \frac{1}{n} \left( 1+ \frac{\tfrac{2}{n}\sum_{i=1}^n Z_i}{\tfrac{1}{n}\sum_{j=1}^n Z_j^2} \right) \approx 1,
\end{align*}
so that $\lim_{n\to \infty} \tfrac{1}{n}\sum_{i=1}^n \mathbb{E}[(R_i^X)^2] = 1$.

A similar computation yields $\lim_{n\to \infty} \tfrac{1}{n}\sum_{i=1}^n \mathbb{E}[(R_i^Y)^2] = 1$, so that indeed $\hat{\rho}_{X,Y|Z} \to \rho_{X,Y|Z}$ as $n \to \infty$ in probability.
\end{proof}

\subsection{Proof of Theorem \ref{main_thm2}} \label{proof_thm2}
\begin{proof}
Assumption \ref{weight_assumption_pc} allows us to apply ParCorr-WLS in every conditional independence test within the PC algorithm. Using Theorem \ref{main_thm1}, we get the result.
\end{proof}

\subsection{Extension of Theorem \ref{main_thm2} to different heteroskedasticity forms} \label{sec_different_hs_forms}
As discussed in section \ref{sec_extension}, the PC algorithm with ParCorr-OLS is not consistent. If we use ParCorr-WLS with a consistent estimate for the variance of the residual in every conditional independence test, we can make the PC algorithm consistent. 

The scope of the current work is a situation where  heteroskedastic noise can be dependent on one of the parents of the node or on the sampling index. However, only for the stricter set of assumptions for Theorem~\ref{main_thm2} we can prove consistency, namely only for sample-index dependent heteroskedasticity.

In the following, we will discuss different situations of heteroskedasticity affecting variables involved in the CI tests within the PC algorithm. In some of these situations, we are able to apply our proposed ParCorr-WLS CI test, thus obtaining consistency. However, in other situations we are not able to use background knowledge about heteroskedasticity and therefore suffer the same problems as in the standard version of the PC algorithm with ParCorr-OLS.



Let us look at the following cases.

\paragraph{Case 1: The heteroskedasticity in $X$ is independent of that in $Y$.}
In case of direct heteroskedasticity, we apply ParCorr-WLS. However, if we encounter a situation where the heteroskedasticity in either $X$ or $Y$ is caused indirectly through a parent, compare figure \ref{fig_case1}, we lack the background knowledge about it, and therefore apply ParCorr-OLS.
Due to Effect \ref{lemma_effect1}, we get type I error control for both ParCorr-OLS and ParCorr-WLS. With regard to type II errors, by Effect \ref{lemma_effect2} it is clear that the power of ParCorr-OLS is reduced.

In a setting where the sample size is large enough to get a reasonable weight approximation for WLS, we can only gain detection power by using ParCorr-WLS within the PC algorithm.

\paragraph{Case 2: There is dependence between the heteroskedasticity in $X$ and that in $Y$.}
If $X$ and $Y$ are uncorrelated, i.e. $\rho_{X, Y| Z} = 0$, and the heteroskedasticity of at least one of them is directly induced by a parent or by the sampling index, then we apply WLS whereby we remove the heteroskedasticity and are in Case 1 or in the homoskedastic case again.

If the heteroskedasticity is indirect, meaning one of the parents of $X$ and/or $Y$, that is not included in the conditioning set, is heteroskedastic, our expert knowledge doesn't include this information and we perform ParCorr-OLS. By Effect \ref{lemma_effect1}, we know that this suffers from inflated false positives. However, if the link is not removed in this step of the PC algorithm, the conditioning set will be increased in subsequent steps. This means at some point it will include the parent that caused the indirect heteroskedasticity, if the link did not get removed before. When this happens we again will be in Case 1 or in the homoskedastic case.

The main challenge, or in other words the reason why we cannot prove consistency for this more general form of heteroskedasticity, arises if $X$ and $Y$ are correlated. If at least one of them is affected by direct heteroskedasticity, then we again apply WLS whereby we remove the heteroskedasticity and are in Case 1 or in the homoskedastic case as above.

But, if the heteroskedasticity is indirect in both $X$ and $Y$, then we apply ParCorr-OLS. Since the power for this test is very low in this situation, and we do not obtain a consistent estimator for the standard deviation of the regression parameter, 
we potentially remove the link between $X$ and $Y$.

\begin{figure}[!htb]
  \centering
  \includegraphics[width=0.3\linewidth]{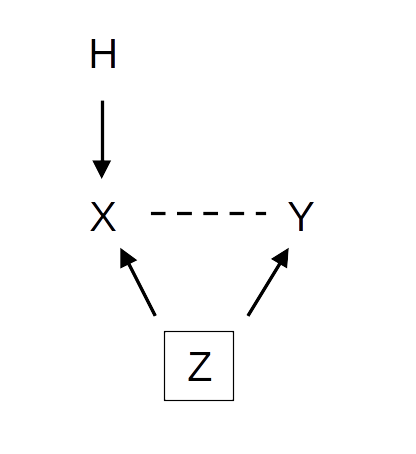} \quad
  \includegraphics[width=0.3\linewidth]{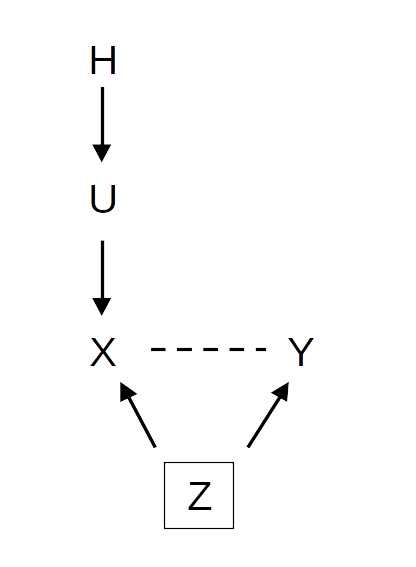}
  \caption{Illustration of Case 1. The dashed line indicates the relationship between $X$ and $Y$ which we are testing. On the left, only $X$ is affected by direct heteroskedasticity introduced by $H$. In this situation, we would apply ParCorr-WLS since the information about this heteroskedasticity is included in the expert knowledge. On the other hand, in case of indirect heteroskedasticity (right), we do not know about the heteroskedasticity in $X$ because we do not know whether $U$ causes $X$ before running the causal discovery algorithm. Therefore, we would apply ParCorr-OLS, which potentially leads to reduced power.}
  \label{fig_case1}
\end{figure}

\begin{figure}[!htb]
  \centering
  \includegraphics[width=0.3\linewidth]{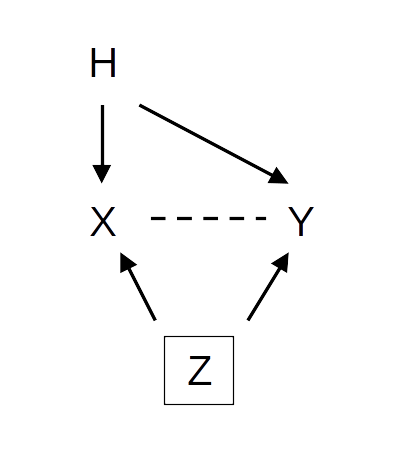} \quad
  \includegraphics[width=0.3\linewidth]{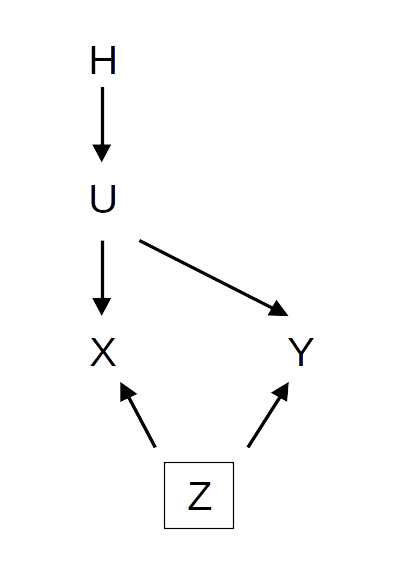}
  \quad
  \includegraphics[width=0.3\linewidth]{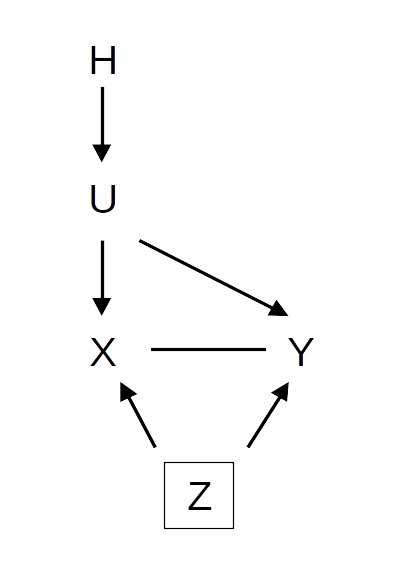} 
  \caption{Illustration of Case 2. Here $X$ and $Y$ are both affected by the same kind of (indirect) heteroskedasticity. On the left, the heteroskedasticity is directly introduced by parent $H$. In this situation, we would apply ParCorr-WLS since the information about this heteroskedasticity is included in the expert knowledge. On the other hand, in case of indirect heteroskedasticity (middle and right), we do not know about the heteroskedasticity in $X$ and $Y$ because we do not know whether $U$ causes $X$ and/or $Y$ beforehand. Therefore, we would apply ParCorr-OLS. If there is no ground truth link between $X$ and $Y$ (middle), this could lead to the wrong detection of a link. However, this means that we would enlarge the conditioning set by including $U$ in a later PC algorithm step. Thereby, the heteroskedasticity is removed. If $X$ and $Y$ are dependent as indicated by the solid line (right), applying ParCorr-OLS might lead to missing this link, i.e.\ to inconsistency in the PC algorithm.}
\end{figure}


\newpage
\subsection{Additional Plots} \label{additional_plots}

\begin{figure}[!htb]
  \centering
  \includegraphics[width=\linewidth]{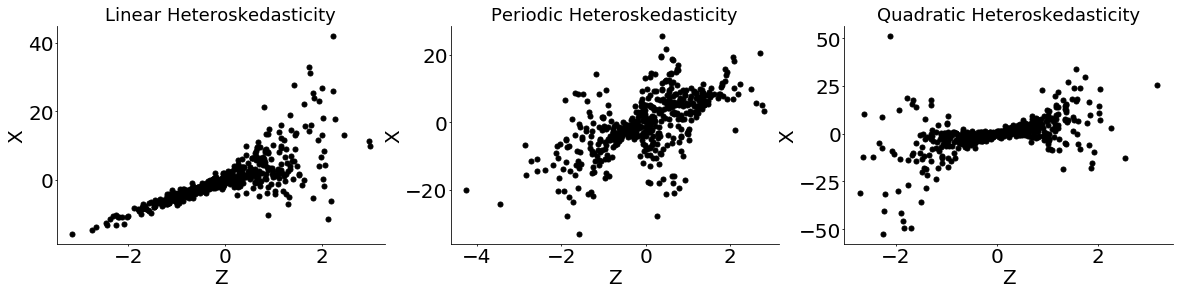}
  
  \caption{Visualizations of the skedasticity functions $h$ that depend on the confounder $Z$ for strength $s=5$, compare equation \ref{hs_types}. }
\end{figure}

\begin{figure}[!htb]
  \centering
  \includegraphics[width=\linewidth]{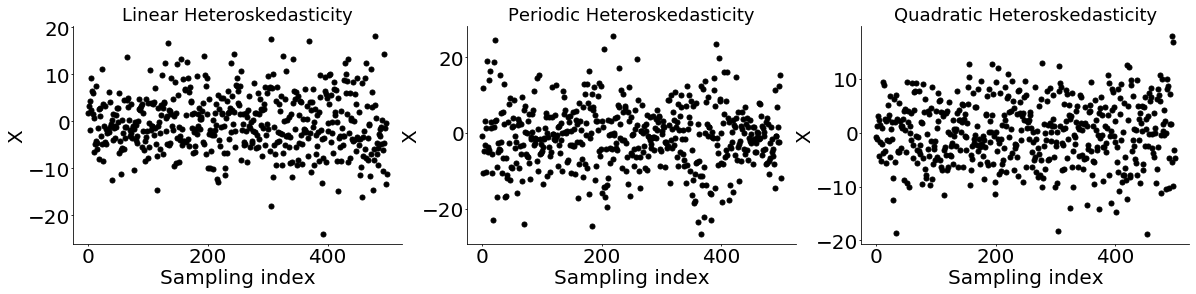}
  
  \caption{Visualizations of the skedasticity functions $h$ that depend on the sampling index for strength $s=5$, compare equation \ref{hs_types}. }
\end{figure}

\begin{figure}[!htb]
  \centering
  \includegraphics[width=\linewidth]{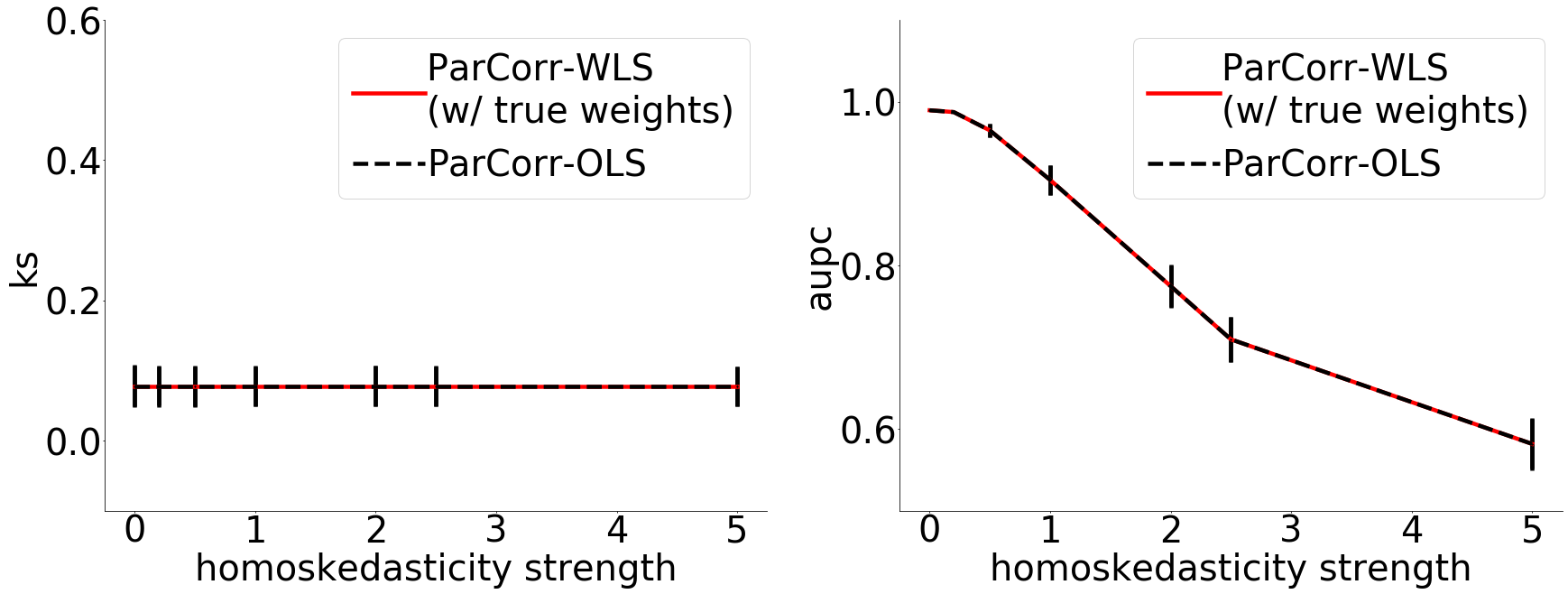}
  
  \caption{Performance of partial correlation CI tests for dependence ($c=0.5$) and conditional independence ($c=0$) between $X$ and $Y$ given $Z$. Shown are KS (left) and AUPC (right) for  different strengths $s$ of noise but the noise is homoskedastic. In other words, we consider the skedasticity function $h_X \equiv s$. This figure illustrates that the power of ParCorr-WLS and ParCorr-OLS reduces in the same way on increasingly noisy but homoskedastic data.}
\end{figure}

\begin{figure} 
  \centering
  \includegraphics[width=0.48\linewidth]{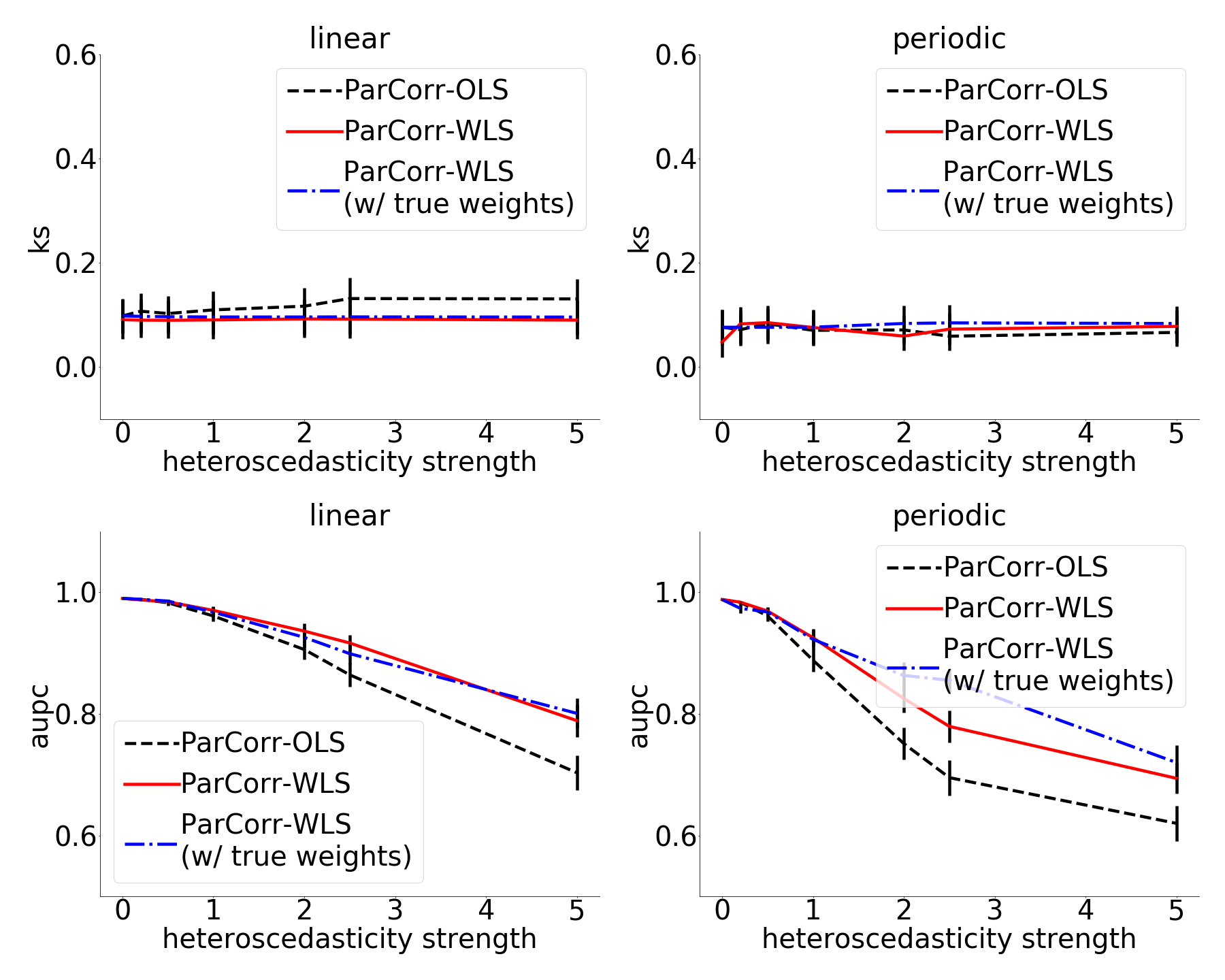}
  \quad
  \includegraphics[width=0.48\linewidth]{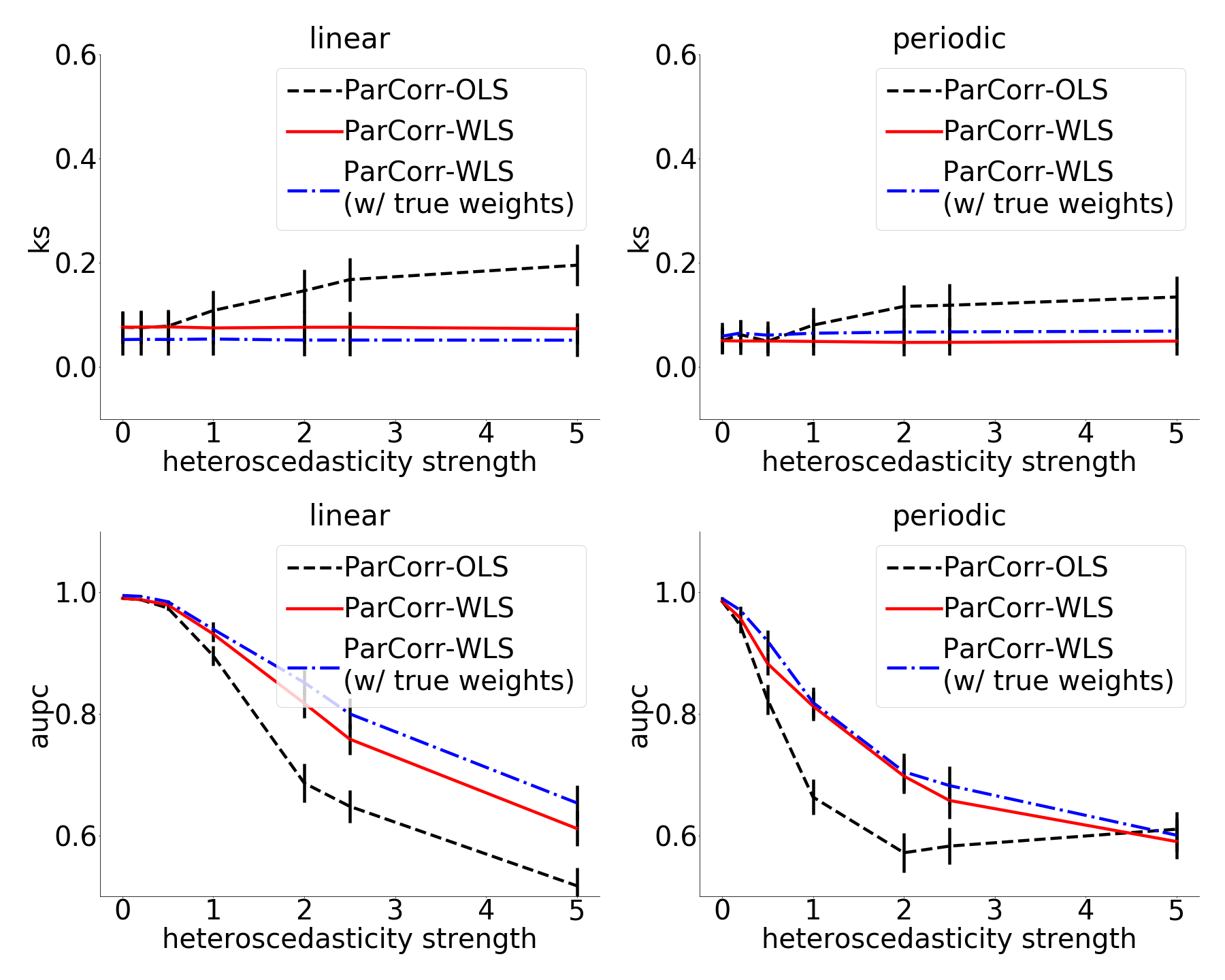}

  \caption{Performance of partial correlation CI tests for dependence ($c=0.5$) and conditional independence ($c=0$) between $X$ and $Y$ given $Z$. In all plots the heteroskedasticity is a function of the sampling index and only affects $X$ (left two columns) or both $X$ and $Y$ (right two columns). Shown are KS (top row) and AUPC (bottom row) for  different strengths of heteroskedasticity for linear (left), periodic (right) noise scaling functions. The ground truth weights are used for WLS, or the weights are estimated using the window approach with window length $10$ as detailed in section \ref{sec_estimation}.  A sample size of $500$ is used and the experiments are repeated $100$ times. }
\end{figure}

\begin{figure}  
  \centering
  \includegraphics[width=0.45\linewidth]{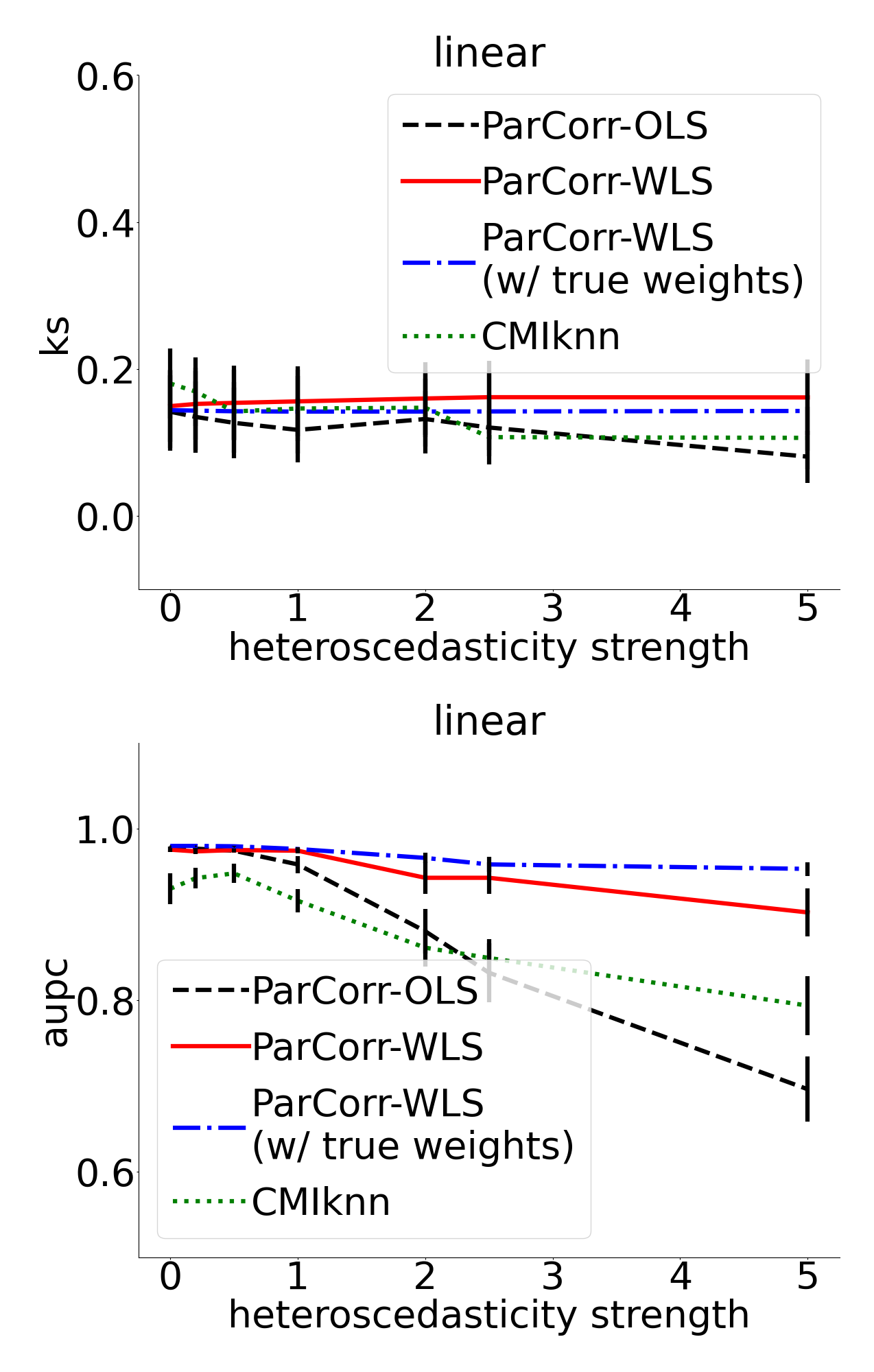}
  \includegraphics[width=0.45\linewidth]{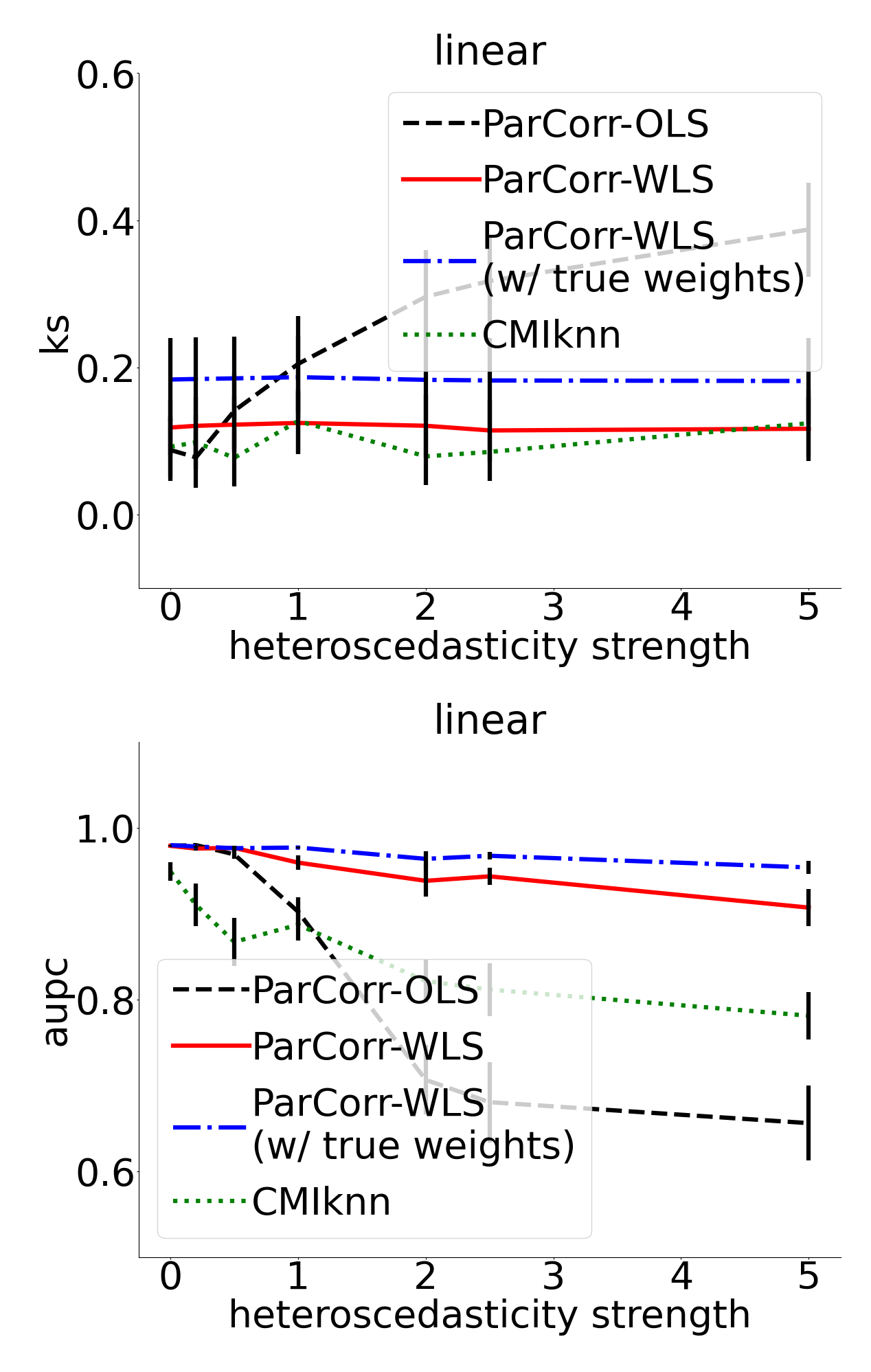}
  \caption{Performance of partial correlation CI tests for dependence ($c=0.5$) and conditional independence ($c=0$) between $X$ and $Y$ given $Z$. We compare our method to ParCorr-OLS and the non-parametric CI test CMIknn \citep{runge2018}. In all plots the heteroskedasticity is a function of $Z$ and only affects $X$ (left two columns) or both $X$ and $Y$ (right two columns). Shown are KS (top row) and AUPC (bottom row) for  different strengths of heteroskedasticity for linear noise scaling functions. The ground truth weights are used for WLS, or the weights are estimated using the window approach with window length $10$ as detailed in section \ref{sec_estimation}.  A sample size of $500$ is used and the experiments are repeated $50$ times. For CMIknn we use $0.1$ as the number of nearest-neighbors around each sample point, otherwise the default values are used.}
  \label{cmiknn}
\end{figure}

\begin{figure}  
  \centering
  \includegraphics[width=\linewidth]{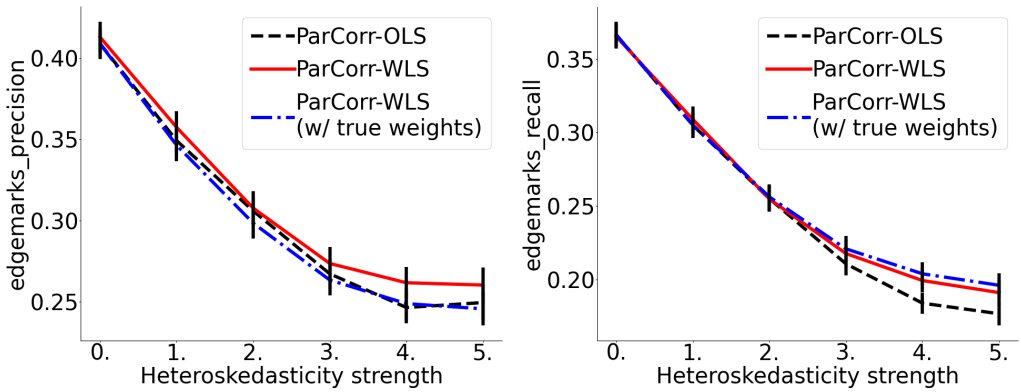}
  \caption{Results for the PC algorithm with the standard ParCorr-OLS CI test compared to that with the proposed ParCorr-WLS test. Shown are edgemark precision ( right) and recall (left) for increasing strengths of heteroskedasticity. The graph has $10$ nodes and $10$ edges, a sample size of $500$ is used. The significance level $\alpha$ is set to $0.05$. The experiment is repeated $500$ times. Errorbars show standard errors. Estimated weights with a window length of $5$ or ground truth weights are used for WLS.}
\end{figure}


\end{document}